    \pgfplotsset{compat=1.17}
\newtheorem{lemma}{Lemma}
\newcommand{\GF}[1]{\mathbb{F}_{#1}}
\newcommand{\myset}[1]{{#1}}
\newcommand{\pr}[1]{\text{\upshape #1}}
\newcommand{\wt}[1]{w_H(#1)\xspace}
\renewcommand{\vec}[1]{\mathbf{#1}\xspace}
\newcommand{\mat}[1]{\mathbf{#1}\xspace}
\newcommand{\ldpc}{$\mathsf{LDPC\text{-}PRC}$\xspace}
\algnewcommand\algorithmicparam{\textbf{Parameter:}}
\algnewcommand\Param{\item[\algorithmicparam]}
\newcommand{\echannel}[0]{\mathcal{E}}
\newtheorem{thm}{Theorem}
\renewcommand{\epsilon}{\varepsilon}
\DeclareMathOperator{\Ber}{\mathrm{Ber}}
\newcommand{\ISD}{\mathsf{overlay}}
\newcommand{\weak}{\mathsf{weak}}
\newcommand{\dis}{\mathsf{dis}}
\newcommand{\mypartial}{\mathsf{partial}}
\newcommand{\KeyGen}{\mathsf{KeyGen}}
\newcommand{\Encode}{\mathsf{Encode}}
\newcommand{\Decode}{\mathsf{Decode}}
\newcommand{\negl}{\mathsf{negl}}
\crefname{algorithm}{Algorithm}{Algorithms} 
\Crefname{algorithm}{Algorithm}{Algorithms} 
\useunder{\uline}{\ul}{}
\newcommand{\mypara}[1]{\smallskip\noindent{\bf {#1}.} \xspace}
\newtheorem{definition}{Definition}
\newcommand{\llm}{$\mathtt{\Pi}_{\rm LLM}$\xspace}
\newcommand{\gim}{$\mathtt{\Pi}_{\rm GIM}$\xspace}
\newtheorem{construction}{Construction}
\crefname{construction}{construction}{constructions}
\newtheorem{configuration}{Config}
\crefname{configuration}{configuration}{configurations}
\definecolor{applegreen}{rgb}{0.55, 0.71, 0.0}
\newtcolorbox{mybox}[1][]{
breakable, 
colback=gray!10!white, 
colframe=white, 
enhanced,
sharp corners, 
boxrule=0pt, 
left=2pt, right=2pt, top=2pt, bottom=2pt, 
before skip=5pt, after skip=5pt, 
}
\newtcolorbox{takeaway}[1][]{
breakable, 
colback=gray!5!white,
colframe=black,
coltitle=black,
fonttitle=\bfseries\itshape,
enhanced,
sharp corners=southwest,
drop shadow,
left=2pt, right=2pt, top=2pt, bottom=2pt, 
before skip=5pt, after skip=5pt, 
}
\begin{document}

\date{}

\title{\Large \bf Cryptanalysis of LDPC-Based Pseudorandom Error-Correcting Codes}

\author{
{\rm Tianrui Wang\textsuperscript{1}}\ \ \
{\rm Anyu Wang\textsuperscript{2,3,4}$^\dagger$}\ \ \
{\rm Tianshuo Cong\textsuperscript{5,6}$^\dagger$}\ \ \ \\
{\rm Delong Ran\textsuperscript{1}}\ \ \
{\rm Jinyuan Liu\textsuperscript{2}}\ \ \
{\rm Xiaoyun Wang\textsuperscript{2,3,4,7,8}}
\\
\textsuperscript{1}\textit{Institute for Network Sciences and Cyberspace, BNRist, Tsinghua University} \ \ \ \\
\textsuperscript{2}\textit{Institute for Advanced Study, Tsinghua University} \ \ \ 
\textsuperscript{3}\textit{Zhongguancun Laboratory, Beijing, China} \ \ \ \\
\textsuperscript{4}\textit{State Key Laboratory of Cryptography and Digital Economy Security, Tsinghua University} \ \ \ \\
\textsuperscript{5}\textit{School of Cryptologic Science and Engineering, Shandong University} \ \ \
\\
\textsuperscript{6}\textit{Shandong Key Laboratory of  Artificial Intelligence Security, Shandong University}
\\
\textsuperscript{7}\textit{National Financial Cryptography Research Center, Beijing, China}
\\
\textsuperscript{8}\textit{Shandong Institute of Blockchain, Shandong, China}
}

\maketitle

\newcommand\blfootnote[1]{%
\begingroup
\renewcommand\thefootnote{}\footnote{#1}%
\addtocounter{footnote}{-1}%
\endgroup
}
\blfootnote{$^\dagger$ Corresponding authors: Anyu Wang (anyuwang@tsinghua.edu.cn) and Tianshuo Cong (tianshuo.cong@sdu.edu.cn)}

\begin{abstract}
Pseudorandom error-correcting codes (PRCs), a novel cryptographic primitive recently proposed at CRYPTO 2024, are primarily applied in undetectable watermarking schemes for large generative models. However, the security of PRCs has not yet been systematically analyzed. To fill this gap, we present the first cryptanalysis of PRCs. Specifically, focusing on LDPC-PRC, the only known practical instantiation of PRCs, we propose three novel attacks that challenge its undetectability and robustness. To rigorously demonstrate the practical threat, we analyze the concrete attack complexity under realistic parameters and validate the attack effectiveness on both real-world large language models and generative image models, including DeepSeek and Stable Diffusion. Our analysis shows that the claimed security guarantees of LDPC-PRC are undermined across all practically feasible regimes. For example, our attacks can detect the presence of a watermark with overwhelming probability at a cost of $2^{22}$ operations. Beyond attacks, we further propose three defenses: parameter recommendation, implementation suggestion, and a revised key generation function. However, PRC-based watermarking schemes still fail to achieve 128-bit security due to inherent constraints of large generative models, such as the maximum output length of large language models. Overall, our work clarifies the concrete security limits of PRCs in real-world watermarking applications.

\end{abstract}

\section{Introduction}
\label{sect/intro}

The rapid growth of generative Artificial Intelligence (AI) models, such as GPT-4o~\cite{openai2024gpt4ocard} and Stable Diffusion~\cite{Rombach_2022_CVPR}, has made AI-Generated Content (AIGC) ubiquitous~\cite{aimedia,aiad}. 
However, this widespread adoption is accompanied by substantial misuse risks~\cite{ALBUSAIDI2024100630,Lucchi_2024}. 
Consequently, reliable identification and provenance tracking of AIGC have become critical for AI governance, with content watermarking standing out as one of the most promising technical solutions~\cite{sok_zhao_sp25}.

Notably, a compelling foundation for AIGC watermarks is the recently developed cryptographic primitive of {Pseudorandom Error-correcting Codes} (PRCs)~\cite{christ2024pseudorandom}. 
A PRC scheme comprises three algorithms $(\KeyGen, \Encode, \Decode)$. 
The $\KeyGen$ algorithm generates a public-secret key pair. 
Any party can use the public key with $\Encode$ to embed a message into a codeword, while only the holder of the corresponding secret key can apply $\Decode$ to perform verification.
Crucially, PRCs are defined by two core security properties: \textit{robustness}, which guarantees that a valid codeword, even after limited distortion, will decode to the original message; and \textit{undetectability}, which ensures that codewords are computationally indistinguishable from random vectors without the secret key. 
These properties capture the essential requirements for a practical watermark by ensuring it resists removal and detection by adversaries, making PRCs particularly suitable for constructing stealthy and robust watermarks for both Large Language Models (LLMs) and Generative Image Models (GIMs).

To instantiate these properties, Christ and Gunn proposed \ldpc~\cite{christ2024pseudorandom}, an instantiation of PRCs based on Low-Density Parity-Check (LDPC) codes.
\ldpc offers provable security under standard assumptions, namely the hardness of the Learning Parity with Noise (LPN) and planted XOR problems. 
These theoretical foundations provide PRC-based watermarking shemes with significant advantages over other heuristic alternatives~\cite{zhao2024provable,DBLP:journals/tmlr/KuditipudiTHL24,DBLP:conf/colt/ChristGZ24,10.62056/ahmpdkp10,Aar22,kirchenbauer2024watermarklargelanguagemodels}.

Despite strong theoretical guarantees, the concrete security of PRCs under practical parameter settings remains largely unexplored. 
In particular, existing PRCs including \ldpc are vulnerable to quasipolynomial-time distinguishing attacks~\cite{christ2025improved}, indicating that their practical undetectability may be significantly weaker than expected.
Although undetectability was originally introduced to preserve the quality of generated content, in realistic watermarking scenarios, breaking PRC undetectability can directly lead to unauthorized watermark detection, information leakage, or further facilitate subsequent watermark removal and forgery attacks.
Likewise, compromising PRC robustness directly leads to targeted watermark removal attacks.
Accordingly, a rigorous security evaluation of PRCs under practical parameters is urgently needed.

\subsection{Our Contributions}
This work presents the first concrete security analysis of PRCs, bridging the gap between asymptotic security guarantees and practical deployment requirements.
Since \ldpc is the only feasible instantiation of PRC to date, our analysis focuses on designing attacks to challenge the undetectability and robustness of \ldpc, consequently facilitating the process of detecting and removing the watermarks of \ldpc-based watermarking schemes.

\mypara{Novel Cryptographic Attacks}
We propose three attacks against \ldpc, with goals ranging from watermark detection to watermark removal.
\Cref{tab:summ_attack_intro} summarizes the theoretical complexity of each attack.

\begin{itemize}

\item \textbf{Attack-I: Partial Secret Key Recovery:}
This attack aims to recover part of the secret key from the public key, enabling an adversary to emulate the $\Decode()$ procedure of \ldpc. 
This gives the adversary a constant distinguishing advantage greater than $\frac{1}{2}$ between codeword and random vector, allowing watermark detection with overwhelming probability given sufficient samples.

\item \textbf{Attack-II: Weak Key Distinguisher:}
This attack exploits an implementation-level vulnerability in the key generation procedure. 
We observe that in instantiated \ldpc-based watermarking schemes, including those for LLMs~\cite{christ2024pseudorandom} and GIMs~\cite{DBLP:conf/iclr/GunnZS25}, public keys may exhibit statistical non-uniformity with non-negligible probability. 
We demonstrate that an adversary can leverage this non-uniformity to detect the presence of a watermark with overwhelming probability.

\item \textbf{Attack-III: Noise Overlay Attack:}
This attack exploits the bounded noise tolerance of the \ldpc's $\Decode$ algorithm. 
By recovering the original Bernoulli noise vector $\vec{e}$, an adversary can construct a carefully designed overlay vector $\vec{e}'$ of a fixed noise weight such that $\vec{e} + \vec{e}'$ exceeds the decoder's correction threshold, whereas a random $\vec{e}'$ with the same weight would not.
This results in watermark removal without violating the practical constraints of LLM or GIM applications.

\end{itemize}


\mypara{Complexity for Concrete Parameter Configurations}
We further analyze the concrete complexities of our attacks by incorporating specific parameter settings for \ldpc-based LLM watermarking scheme (denoted as \llm~\cite{christ2024pseudorandom}) and GIM watermarking scheme (denoted as \gim~\cite{DBLP:conf/iclr/GunnZS25}). 
For Attack-I, our analysis shows that the complexities are sub-security-level for all parameter choices, where the actual security of \gim remains \textit{below $46$ bits}.  
Attack-II further exposes vulnerabilities resulting from a non-negligible frequency of weak keys across all parameter choices.  
For example, weak keys appear with a nearly $100\%$ probability when $t=3,4$ for \llm, under which a distinguisher can be constructed in polynomial time. 
For Attack-III, we prove that the security level of \llm \textit{cannot exceed $75$ bits} for any parameter choice with $n \le 2^{17}$ under the threat of noise overlay attack.  
Meanwhile, the security level of \gim \textit{is below $60$ bits} with a much lower concrete noise rate.

\begin{table}[t]
\centering
\caption{The summary of the attack complexities. $n$ is the code length of \ldpc, $g$ is the column dimension of the public key $\mat{G}$, $t$ is the row weight of the secret key $\mat{P}$. 
The $\alpha$ in Attack-I is calculated in Theorem~\ref{thm:partial_recovery}. For Attack-II, $\pr{P}$ is the probability that a weak key exists. 
The $\epsilon$ in Attack-III corresponds to the error-correcting capability.}
\setlength{\tabcolsep}{7pt}
\begin{tabular}{c|c|c}
\toprule
   & Goal                                                                                    & \begin{tabular}[c]{@{}c@{}}Time\\ Complexity\end{tabular}                                      \\ \midrule
Attack-I  & \begin{tabular}[c]{@{}c@{}}Against Undetectability\\ (Watermark Detection)\end{tabular} & \begin{tabular}[c]{@{}c@{}}$g \cdot \alpha \cdot \binom{n/2}{\lceil t/2 \rceil}$\\ Theorem~\ref{thm:partial_recovery}\end{tabular}                                                                               \\ \midrule
Attack-II & \begin{tabular}[c]{@{}c@{}}Against Undetectability\\ (Watermark Detection)\end{tabular} & \begin{tabular}[c]{@{}c@{}}$\pr{P}^{-1} \cdot n \cdot g$\\ Theorem~\ref{thm:weakey_dis}\end{tabular}                                                                                                                                         \\ \midrule
Attack-III  & \begin{tabular}[c]{@{}c@{}}Against Robustness\\ (Watermark Removal)\end{tabular}        & \begin{tabular}[c]{@{}c@{}}$(\frac{1}{2}+\epsilon)^{-g} \cdot n^{3}$\\ Theorem~\ref{thm:error_recover}\end{tabular}                                                                                                    \\ \bottomrule
\end{tabular}
\label{tab:summ_attack_intro}
\end{table}

\mypara{Evaluation on Real-World Applications}
We choose DeepSeek~\cite{deepseekai2025deepseekr1incentivizingreasoningcapability} and Stable Diffusion~\cite{Rombach_2022_CVPR} as real-world LLM and GIM to implement \llm and \gim, respectively.
To the best of our knowledge, this is the first real-world implementation for \llm.
However, we validate that \llm is impractical for embedding watermarks into LLMs as the entropy of output tokens is too low under reasonable text qualities, rendering the watermark undetectable by the original decoder.
Furthermore, evaluation results demonstrate that both the proposed Attack-I and Attack-II can achieve a 100\% attack success rate against \llm and \gim under certain parameter choices (refer to Table~\ref{tab:concrete_experiment_llm} and Table~\ref{tab:concrete_experiment_gim}).

\mypara{Mitigation Methods}
We discuss strategies to mitigate the threats posed by the proposed attacks. 
For Attack-I, we suggest that selecting appropriate parameters is sufficient to make recovering any part of the secret key computationally difficult. 
For Attack-II, we propose a revised key generation algorithm that significantly reduces the probability of generating weak keys while maintaining the randomness of the \ldpc keys. 
Regarding Attack-III, the most effective defense is to increase the code length $n$ to raise the attack's complexity. 
However, achieving a high security level, e.g., $128$-bit, may necessitate an unreasonably large $n$.
Our analysis shows that $n > 2^{24}$ is a necessary condition for $128$-bit security across all choices of $t$. 
This requirement far exceeds the maximum token length of state-of-the-art LLMs (only $2^{15}$).
Consequently, defending against Attack-III through parameter selection alone remains difficult for practical LLM implementation. 
Additionally, We further suggest a modification to the GIM implementation, which mitigates the vulnerability caused by its low-weight noise parameter.

\mypara{Takeaways}
Our analysis reveals a practical limitation of current PRC-based LLM watermarking schemes, where preventing practical watermark-removal attacks requires an impractically large code length $n$, far beyond the output length supported by current LLMs.
In contrast, PRC-based image watermarking may still admit practical instantiations with acceptable perceptual quality, although their undetectability guarantees can be weakened in practice.

\section{Related Works}
\label{sect/related}
\mypara{Theoretical Studies on PRCs}
Due to the promising potential of PRCs, besides studying how to apply PRCs to AIGC watermarking~\cite{christ2024pseudorandom,DBLP:conf/iclr/GunnZS25}, several fundamental theoretical studies on PRCs have recently emerged, such as proving CCA-security based on the hardness of LPN~\cite{alrabiah2024ideal}, discussing unconditionally indistinguishability against space-bounded adversaries~\cite{ghentiyala2024new}, and performing a black-box reduction from one-way functions with sub-constant error rates \cite{garg2025blackboxcryptouselesspseudorandom}. 
Unfortunately, these studies have not thoroughly explored the security of PRCs. 
Note that although \cite{christ2024pseudorandom,alrabiah2024ideal} provide a theoretical evaluation of PRC's robustness, it does not conduct a security assessment in conjunction with the concrete parameters.

\mypara{Watermarking Schemes for AIGC}
AIGC watermarks can typically be classified into two categories: in-processing schemes and post-processing schemes.
Post-processing schemes embed imperceptible yet detectable signals after text generation~\cite{sato2023embarrassinglysimpletextwatermarks,liu2024survey} or image generation~\cite{deb2012combined,zhu2018hidden,tancik2020stegastamp}, inevitably resulting in content quality degradation.
Instead, both \llm and \gim belong to the in-processing approach~\cite{kirchenbauer2024on,zhao2024provable,wen2023tree,yang2024gaussian,fernandez2023stable}, wherein watermark insertion is merged with the process of generating content.
With a pseudorandom codeword embedded, the distribution of the watermarked output texts of \llm is proven to be identical to that of the original texts, while \gim utilizes a watermarked latent that maintains a Gaussian distribution to ensure watermark invisibility.

\mypara{Attacks against AIGC Watermarks}
To remove watermarks, mainstream methods typically apply heuristic perturbations to the AIGC, such as paraphrasing~\cite{DBLP:conf/nips/KrishnaSKWI23,pan2024markllm} or inserting perturbations~\cite{kirchenbauer2024on,jovanovic2024watermark} for the generated texts, or using processing operations~\cite{tallam2025removingwatermarkspartialregeneration,DBLP:conf/icml/AnDRAXDZMWGH24} such as photometric distortions for images.
Unlike these empirical removal attacks, our cryptographic attacks provide provable success rates and well-defined complexity.
For PRC watermark detection, Gunn et al.~\cite{DBLP:conf/iclr/GunnZS25} trained a ResNet-18 classifier~\cite{He_2016_CVPR} and reported that it fails to distinguish \ldpc-based watermarked images from non-watermarked ones.
However, we propose two different attacks to challenge the undetectability of \gim.

\section{Preliminaries}
\label{sect/pre}
\subsection{Pseudorandom Error-correcting Codes}
\label{subsec:brief_descrip_prc}
We begin by introducing the formal definition of PRC and its instantiation based on LDPC codes. 
Our focus is on public-key PRCs due to their broader applicability, as they allow any party to embed a watermark while reserving detection capability for a regulator holding the secret key. 
In contrast, secret-key PRCs enable any party capable of embedding a watermark to also detect it locally, which makes forgery or removal attacks easier for malicious users.

\subsubsection{Definition of PRCs}



\begin{definition}[Zero-bit public-key PRCs~\cite{christ2024pseudorandom}]
\label{def:prc}
Let $\Sigma$ be a fixed alphabet, $\echannel: \Sigma^* \rightarrow \Sigma^*$ be a noisy channel modeled by adding a noise vector with a Bernoulli distribution, and $k(\lambda) \in \mathbb{N}$ be the length of the message under any fixed $\lambda$. A PRC is a triple of polynomial-time randomized algorithms $(\KeyGen, \Encode, \Decode)$ that satisfy
\begin{itemize}

\item (\textbf{Robustness}) For any message $\mathsf{m} \in \Sigma^{k(\lambda)}$ and any $\lambda \in \mathbb{N}$, let $\vec{x} \leftarrow \Encode(1^\lambda,\mathsf{pk},\mathsf{m})$, $\vec{x}' \leftarrow \echannel(\vec{x}) $ then
\[ 
\mathop{\Pr}\limits_{(\mathsf{sk},\mathsf{pk}) \leftarrow \KeyGen(1^\lambda)}[\mathsf{Decode}(1^\lambda,\mathsf{sk},\vec{x}')=\mathsf{m}] \ge 1-\negl(\lambda).
\]
\item (\textbf{Soundness}) For any fixed $\vec{c} \in \Sigma^*$,
\[ 
\mathop{\Pr}\limits_{(\mathsf{sk},\mathsf{pk}) \leftarrow \KeyGen(1^\lambda)}[\Decode(1^\lambda,\mathsf{sk},\vec{c})=\perp ] \ge 1-\negl(\lambda).
\]
\item (\textbf{Undetectability}) For any polynomial-time adversary $\mathcal{A}$,
let
$
\Pr[\mathsf{E}]= {\Pr}[\mathcal{A}^{\Encode(1^\lambda,\mathsf{pk},\cdot)}(1^\lambda,\mathsf{pk})=1]
$
and
$\Pr[\mathsf{U}] = {\Pr}[\mathcal{A}^{\mathcal{U}}(1^\lambda,\mathsf{pk})=1] $, then 
\[
| \mathop{\Pr}\limits_{(\mathsf{sk},\mathsf{pk}) \leftarrow \KeyGen(1^\lambda)}[\mathsf{E}] - \mathop{\Pr}\limits_{(\mathsf{sk},\mathsf{pk}) \leftarrow \KeyGen(1^\lambda) \atop \mathcal{U}}[\mathsf{U}] |
\le \negl(\lambda).      
\]
where $\mathcal{A}^{\mathcal{U}}$ denotes an adversary with access to a uniform random oracle along with previous queries.


\end{itemize}
\end{definition}

\noindent Definition~\ref{def:prc} describes zero-bit PRCs, where the encoded message $\mathsf{m}$ is a single fixed message and can therefore be omitted in the following discussions. 
Multi-bit PRCs are constructed upon this zero-bit primitive~\cite{christ2024pseudorandom}. 
In this paper, we discuss attacks specifically against zero-bit PRCs, and these attacks naturally extend to the multi-bit setting. 
Furthermore, as noted in~\cite{christ2024pseudorandom}, the soundness property is primarily a technical condition ensuring non-triviality.
Consequently, our attacks do not target this property.


\mypara{Relationship to Stronger Robustness Definitions}
In this work, we focus on the basic robustness notion in Definition~\ref{def:prc}, where the adversary perturbs through a substitution noise. 
Other stronger robustness definitions have also been considered. 
For example, adaptive robustness in~\cite{alrabiah2024ideal} grants the adversary interactive query access of the decoder. 
Substring robustness in~\cite{christ2024pseudorandom} further allows deletion-based cropping attacks in addition to the substitution perturbations already covered by Definition~\ref{def:prc}.
Therefore, these robustness notions assume strictly stronger adversarial capabilities.

Our attack in~\Cref{subsec:msg_recovery} constructs perturbations by exploiting the underlying error structure of the scheme. 
Since adaptive robustness and substring robustness only extend the adversarial capabilities beyond those in Definition~\ref{def:prc}, the same attack strategy can be directly applied to break these stronger robustness notions as well.

\subsubsection{Instantiation of PRCs}

\begin{construction}[\ldpc~\cite{christ2024pseudorandom}]
\label{def:pub_prc}

Based on LDPC codes, an instantiation of PRC named \ldpc can be constructed as follows.
\begin{itemize}
\item $\mathsf{KeyGen}(1^{\lambda})${\rm :} 
\begin{enumerate}
\item Sample a random matrix $\mat{P} \in \GF{2}^{r \times n}$ where each row of $\mat{P}$ has weight $t$.
\item Sample a random matrix $\mat{G} \in \GF{2}^{n \times g}$ where $\mat{P}\mat{G}=\mat{0}$. 
\item Sample a random vector $\mat{z} \in \GF{2}^{n}$.
\item Output public key $\mathsf{pk}=(\mat{G},\mat{z})$ and secret key $\mathsf{sk}=(\mat{P},\mat{z})$.
\end{enumerate}

\item $\mathsf{Encode}(1^\lambda,(\mat{G},\mat{z}))${\rm :}
\begin{enumerate}
\item Sample a random vector $\mat{s} \in \GF{2}^{g}$ and $\mat{e} \leftarrow \Ber(n,\omega)$.
\item Output the codeword $\mat{x} \leftarrow \mat{G}\mat{s}+\mat{e}+\mat{z}$.
\end{enumerate}

\item $\mathsf{Decode}(1^\lambda,(\mat{P},\mat{z}),\mat{x})${\rm :}
If $\wt{\mat{P}(\mat{x}+\mat{z})} \le (\frac{1}{2}-r^{-1/4})\cdot r$, output 1; otherwise output $\perp$.
\end{itemize}
\end{construction}


\noindent For clarity, we list the key mathematical notations for \ldpc in~\Cref{tab:notations} (Appendix~\ref{supp/llm_gim_watermark}).

As defined in~\Cref{def:pub_prc}, the \ldpc scheme begins by generating a key pair. 
The secret key $\mat{P}$ is an LDPC parity-check matrix, while the public key $\mat{G}$ is a matrix whose column space is the dual subspace of $\mat{P}$. 
The random vector $\vec{z}$ serves as a one-time pad.
Without it, the zero vector would always decode to $1$, violating the soundness property.

Under the planted XOR assumption, the public key $\mat{G}$ does not leak information about the secret key $\mat{P}$. 
Furthermore, the hardness of the LPN problem ensures that the encoded message $\mat{G}\vec{s} + \vec{e}$ is computationally indistinguishable from a uniform random vector.

During decoding, the Hamming weight of $\mat{P}(\vec{x}+\vec{z}) = \mat{P}(\mat{G}\vec{s} + \vec{e}) = \mat{P}\vec{e}$ is significantly lower than $\frac{r}{2}$ due to the sparsity of both $\mat{P}$ and $\vec{e}$. 
In contrast, the product of $\mat{P}$ with a random vector yields a Hamming weight close to $\frac{r}{2}$. 
In the zero-bit setting, this decoding procedure is equivalent to watermark detection.
Consequently, the watermark can be detected with overwhelmingly high probability by comparing the Hamming weight to an appropriately chosen threshold.

\subsection{Large Generative Models}
As PRCs are primarily applied in undetectable watermarking schemes for large generative models, this part introduces the fundamental generative principles of such models to facilitate a more comprehensive understanding of watermarking schemes.
The formal definitions of LLMs and GIMs are as follows.

\begin{definition}[$\mathsf{LLM}$]
A large language model $\mathsf{LLM}$ over a token set $\mathcal{T}$ is a probabilistic algorithm that takes a prompt $\mathsf{prompt}\in \mathcal{T}^*$ as input and generates succeeding text $T\in \mathcal{T}^\mathsf{len}$.
The generation process of $\mathsf{LLM}$ is detailed in~\Cref{alg:LLM}.
\end{definition}

\begin{definition}[$\mathsf{GIM}$]
A generative image model $\mathsf{GIM}$ is a probabilistic algorithm that takes a prompt $\mathsf{prompt}\in \mathcal{T}^*$ as input and outputs an image $\mathsf{Img} \in [0,255]^{c\times w \times h}$.  
The generation process of $\mathsf{GIM}$ is detailed in~\Cref{alg:GIM}.
\end{definition}

For LLMs, the model takes a prompt as input and generates text in a token-by-token manner.
At each step, the model receives the initial prompt combined with the previously generated tokens as input, producing a probability distribution over the vocabulary from which the subsequent token is sampled accordingly.
Watermarks can be embedded by influencing this probability distribution using a vector generated by \ldpc.

For GIMs, the model first samples an initial latent from a Gaussian distribution and then progressively denoises it
using a \texttt{U-net} conditioned on the input prompt.
Finally, an image is generated via an autoencoder decoder \texttt{D-net}.
Watermarks are embedded by selecting the signs of the initial latent according to the binary string generated by \ldpc.

\subsection{PRC-Based Watermarking Schemes}
\label{sec:llm-scheme}

We formalize the key properties of a watermarking scheme for generative models in~\Cref{def:watermarking_scheme}.
Informally, \textit{robustness} captures the difficulty of removing the watermark under bounded disturbance, \textit{soundness} requires a negligible false positive rate, and \textit{undetectability} means that no efficient adversary can detect the presence of a watermark with non-negligible advantage.

\begin{definition}[Watermarking Scheme]
Let $\Sigma$ be a fixed alphabet and $\echannel: \Sigma^* \rightarrow \Sigma^*$ be a channel with the security parameter $\lambda$,
a watermarking scheme is a tuple of polynomial-time algorithms $(\mathsf{Setup},\mathsf{Wat},\mathsf{Detect})$ that satisfy 
\begin{itemize}
    \item (Robustness) For any content $x \leftarrow \mathsf{Wat}_{\mathsf{pk}}(\mathsf{prompt})$,
    \[ 
\mathop{\Pr}\limits_{(\mathsf{sk},\mathsf{pk}) \leftarrow \mathsf{Setup}(1^\lambda)}[\mathsf{Detect}_{\mathsf{sk}}(\echannel(x))=\mathsf{true}] \ge 1-\negl(\lambda).
\]
    \item (Soundness) For any fixed $c \in \Sigma^*$,
    \[ \mathop{\Pr}\limits_{(\mathsf{sk},\mathsf{pk}) \leftarrow \mathsf{Setup}(1^\lambda)}[\mathsf{Detect}_{\mathsf{sk}}(c)=\mathsf{false}] \ge 1-\negl(\lambda).
    \]
    \item (Undetectability) For any polynomial-time adversary $\mathcal{A}$ mimics $\mathsf{Detect}$, let $\mathsf{Uni}$ be an oracle that outputs random vector in $\Sigma^*$, and $
    x \xleftarrow{\$}
\mathsf{Uni}() $ with probability $\frac{1}{2}$ while $ x \xleftarrow{\$}
\mathsf{Wat}_{\mathsf{pk}}(\mathsf{prompt})      \text{ with probability } \frac{1}{2}
$, then
    \[ | \mathop{\Pr}\limits_{(\mathsf{sk},\mathsf{pk}) \leftarrow \mathsf{Setup}(1^\lambda)}[\mathsf{Detect}_{\mathsf{sk}}(x)=\mathcal{A}_{\mathsf{pk}}(x)] - \frac{1}{2} | \le \negl(\lambda),\]

\end{itemize}
\label{def:watermarking_scheme}
\end{definition}


\begin{algorithm}[t]
\caption{Key Generation Algorithm of \llm~\cite{christ2024pseudorandom}}
\label{alg:watermark_keygeneration}
\begin{algorithmic}[1]

\Require {Parameters $n$, $r=0.99n$, $g$, and $t$}
\Ensure Matrices $\mat{P}$ and $\mat{G}$ for watermark generation
\State Sample a uniformly random matrix $\mat{G_0} \leftarrow \mathbb{F}_2^{0.01n \times g}$
\State Initialize an empty list for rows of $\mat{P}$

\ForAll{$i \in [0.99\,n]$}
    \State Sample a random $(t-1)$-sparse vector $\vec{s}_i \in \mathbb{F}_2^{0.01n}$
    \State Compute $\mat{G}_i$ by appending the row $\vec{s}_i^T \mat{G}_0$ to $\mat{G}_{i-1}$:
    \[
    \mat{G}_i = 
    \begin{bmatrix}
    \mat{G}_{i-1} \\
    \vec{s}_i^T \mat{G}_0
    \end{bmatrix}
    \]
    \State Construct $\vec{s}_i' = [\vec{s}_i^T, 0^{i-1}, 1, 0^{0.99n-i}]$
    \State Append $\vec{s}_i'$ to the list of rows for $\mat{P}$
\EndFor

\State Let $\mat{P}$ be the matrix whose rows are $\vec{s}_1', \ldots, \vec{s}_{0.99n}'$
\State Let $\mat{G} = \mat{G}_{0.99n}$
\State Sample a random permutation ${\Pi} \in \GF{2}^{n \times n}$ and let $\mat{P} \leftarrow \mat{P}\Pi^{-1}, \mat{G} \leftarrow \Pi\mat{ G} $

\State \Return $(\mat{P}, \mat{G})$

\end{algorithmic}
\end{algorithm}

\subsubsection{Watermarking Scheme for LLMs}

In~\cite{christ2024pseudorandom}, Christ and Gunn proposed an approach to design a watermarking scheme for LLMs (denoted as \llm).
The specific construction is presented in~\Cref{construction_llm}.

\begin{construction}[\llm~\cite{christ2024pseudorandom}]
\label{construction_llm}
A watermarking scheme \llm for $\mathsf{LLM}$ over $\mathcal{T}$ is a tuple of polynomial-time algorithms \llm=$(\mathsf{Setup},\mathsf{Wat},\mathsf{Detect})$ where
\begin{itemize}
    \item $\mathsf{Setup}(1^\lambda)$ outputs $(\mathsf{sk},\mathsf{pk}) \leftarrow \mathsf{LDPC\text{-}PRC}.\KeyGen(1^\lambda)$.
    The matrices are sampled through~\Cref{alg:watermark_keygeneration}.
    
    \item $\mathsf{Wat}_{\mathsf{pk}}(\mathsf{prompt})$ is a randomized algorithm that generates watermarked responses.
    The entire response generation procedure is similar to~\Cref{alg:LLM}, except for the token sampling process (Line 5).
    Specifically, \llm first outputs a codeword $\vec{x} \leftarrow \mathsf{LDPC\text{-}PRC}.\mathsf{Encode}(1^\lambda,\mathsf{pk})$, where $|\vec x |=\mathsf{len}*\lceil \log_2|\mathcal{T}|\rceil$.
    Then, this codeword is split as $\mathsf{len}$ trunks and guides the token sampling process in a bit-wise manner through~\Cref{alg:TokenSample}.
    
    \item $\mathsf{Detect}_{\mathsf{sk}}({T})$ is an algorithm that outputs $\mathsf{true}$ or $\mathsf{false}$.
    \llm first extracts the codeword $\vec x'$ following~\Cref{alg:TextInv}, then runs $\mathsf{LDPC\text{-}PRC}.\Decode(1^\lambda,\mathsf{sk},\vec x')$.
    If the result is $1$, this algorithm outputs $\mathsf{true}$, indicating the watermark is present; otherwise outputs $\mathsf{false}$.
\end{itemize}
\end{construction}

\noindent 
Overall, \llm embeds an \ldpc codeword into the output text by biasing the token sampling probabilities according to the codeword bits.
During generation, the codeword is partitioned into fixed-length segments, and each segment influences the sampling distribution of one token at the bit level.
On the detection side, the embedded bits are reconstructed from the generated text and verified by applying the \ldpc decoding algorithm with the secret key.
However,~\Cref{construction_llm} is only a theoretical construction.
In~\Cref{subsec:realworld_llm}, we will analyze its practical utility and show that the output text quality is severely compromised, since the error-correcting capability of \llm is too limited to support reliable detection under low token entropy.
We further provide the first implementation of \llm on a real-world LLM, empirically demonstrating its impracticality.

\subsubsection{Watermarking Scheme for GIMs}

Gunn et al.~\cite{DBLP:conf/iclr/GunnZS25} further extended \ldpc to the design of GIM watermarking schemes (denoted as \gim).
A detailed description of \gim is provided in~\Cref{construction_gim}.

\begin{construction}[\gim~\cite{DBLP:conf/iclr/GunnZS25}]
\label{construction_gim}
A watermarking scheme \gim for $\mathsf{GIM}$ is a tuple of polynomial-time algorithms $(\mathsf{Setup},\mathsf{Wat},\mathsf{Detect},\mathsf{Decode'})$ where
\begin{itemize}
    \item $\mathsf{Setup}(1^\lambda)$ outputs $(\mathsf{sk},\mathsf{pk}) \leftarrow \mathsf{LDPC\text{-}PRC}.\KeyGen(1^\lambda) $.
    The matrices are sampled in a way similar to~\Cref {alg:watermark_keygeneration}, which will be discussed later. 
    \item $\mathsf{Wat}_{\mathsf{pk}}(\mathsf{prompt})$ is a randomized algorithm that outputs a watermarked image $\widetilde{\mathsf{Img}}$.
    To this end, \gim uses a watermarked latent $\tilde{\vec{y}}$ as $\vec{y}^s$ of~\Cref{alg:GIM} to generate images.
    The generation process of $\tilde{\vec{y}}$ is 
    \[
    \tilde{\vec{y}}=(\tilde{y}_1,\cdots,\tilde{y}_n) \quad \text{where} \quad \tilde{y}_i =(1-2x_i)\,|y_i|,
    \]  
    where $\vec{y}=(y_1,\cdots,y_n)\sim \mathcal{N}(\vec 0,{\bf I}_n)$ and $\vec{x} \leftarrow \mathsf{LDPC\text{-}PRC}.\mathsf{Encode}(1^\lambda,\mathsf{pk})$.



    \item $\mathsf{Detect}_{\mathsf{sk}}(\mathsf{Img})$ is an algorithm that outputs whether $\mathsf{Img}$ is watermarked.
    \gim first extract a codeword $\vec{x}'$ from $\mathsf{Img}$ through~\Cref{alg:GIMInv}, and then run \ldpc.$\Decode(1^\lambda,\mathsf{sk},\vec x')$.
    If the result is $1$, $\mathsf{Detect}$ outputs $\mathsf{true}$, and $\mathsf{false}$ vise versa.
\end{itemize}
\end{construction}

\noindent We highlight that there are two main differences between \gim and \llm.
First, \gim initializes the key pair by generating a sparse vector $\vec{s}_i \in \GF{2}^{(n-r+i)}$ instead of $\GF{2}^{n-r}$ (Line 4 of~\Cref{alg:watermark_keygeneration}).
Second, in order to embed extra messages and estimate the false positive rate, \gim increases the column number of $\mat{G}$ from $g$ to $k$ by adding message bits and parity-check bits, which could be seen as a multi-bit PRC with its own implementation.
Besides detecting the watermark, \gim also introduces a $\Decode'()$ function to recover the original message $\vec{s}$ with the belief propagation algorithm, leveraging the sparsity of the secret key $\mat{P}$.
However, we will demonstrate that such a modification can cause vulnerability, as shown in~\Cref{subsec:defense2}. 
The complete construction and implementation details of \gim can be found in~\cite{DBLP:conf/iclr/GunnZS25}.

\section{Attacks Against \ldpc}
\label{sect/adaptive_attack}
In this section, we propose three attacks and provide analyses of their complexities.

\subsection{Attack-I: Partial Secret Key Recovery}
\label{sec:partial_sk_recovery}

Given a public key $\mat{G}\in \GF{2}^{n \times g}$, Attack-I aims to recover a set of row vectors $\{\vec{v}^{(j)}\} \subseteq \mathbb{F}^{n}_2$ of the secret key $\mat{P}$. 
Using these vectors, an adversary can distinguish whether a set of target vectors $\{\vec{x}^{(i)}\}$ is PRC-encoded by computing the distribution of $\langle\vec{v}^{(j)},\vec{x}^{(i)}+\vec{z}\rangle$. 
The attack procedure is shown in~\Cref{fig:workflow_key_recovery} and summarized in \Cref{alg:attack2_partial} of Appendix~\ref{supp/attack_algorithm}.

\begin{figure}[b]
\centering
\includegraphics[width=0.85\linewidth]{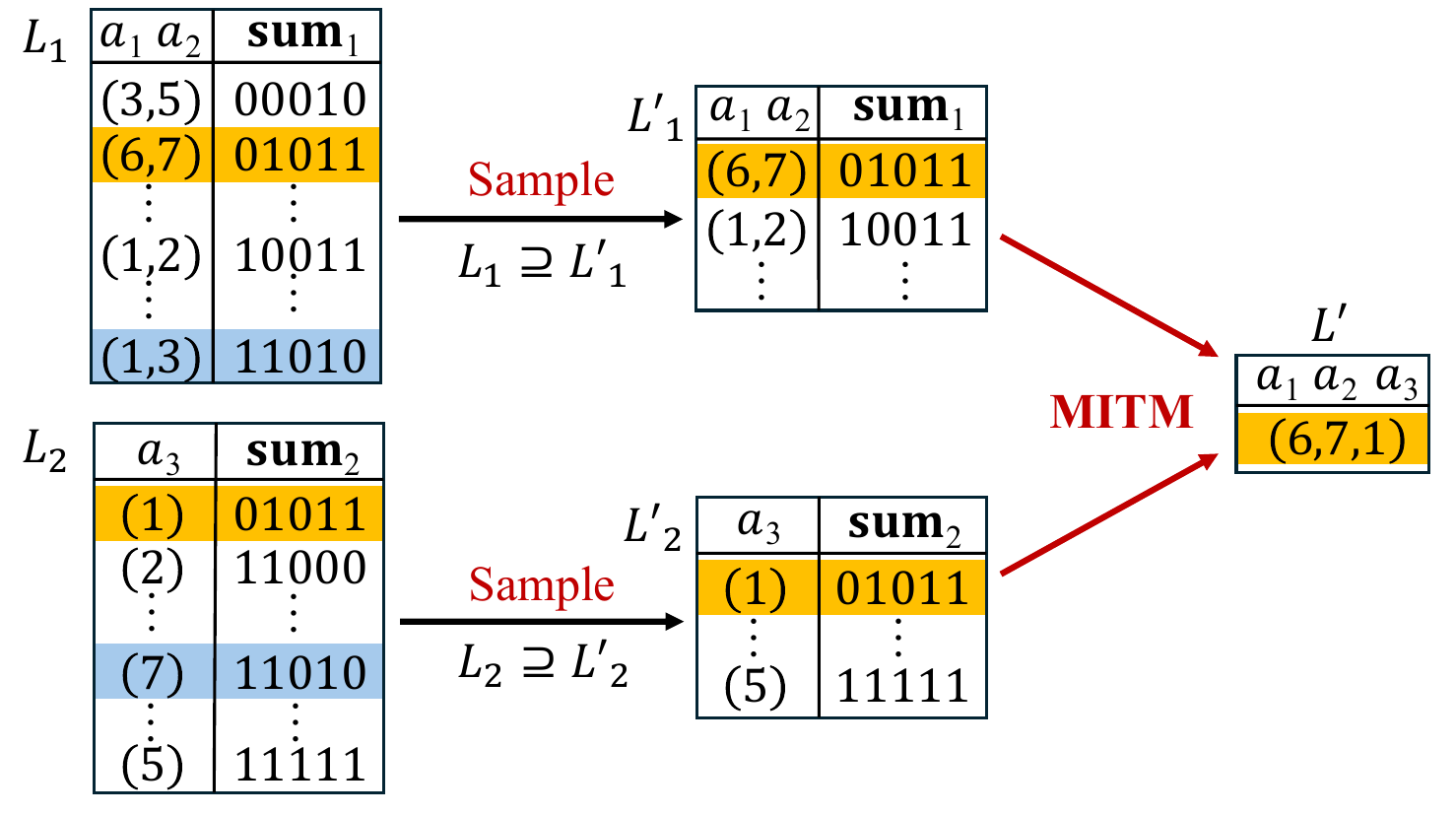}
\caption{Illustration of the Meet-in-the-Middle (MITM) Attack utilized in Partial Secret Key Recovery ($t=3$).
}
\label{fig:workflow_key_recovery}
\end{figure}

\subsubsection{Pipeline of Attack-I}

\mypara{Stage-I: Low-Weight Kernel Vectors Recovery}
Since Attack-I aims to compute a set of row vectors $\{\vec{v}^{(j)}\} \subseteq \mathbb{F}^{n}_2$ of the secret key $\mat{P}$ which satisfy $\vec{v}^{(j)} \mat{G} = \vec{0}$ and $\wt{\vec{v}^{(j)}} = t$, we adapt the Information Set Decoding (ISD) algorithm~\cite{becker2012decoding} to recover such a set of short vectors using a Meet-In-The-Middle (MITM) approach.

Let $n_1=\lceil n/2 \rceil$, $n_2=n-n_1$, $t_1=\lceil t/2 \rceil$, and $t_2=t-t_1$. 
For simplicity, we assume $n$ is even, so $n_1 = n_2 = n/2$. 
Let $\myset{L}$ be the set of vectors in $\mathbb{F}^{n}_2$ satisfying $\wt{\vec{v}_{1:n_1}}=t_1$ and $\wt{\vec{v}_{n_1+1:n}}=t_2$ for all $\vec{v} \in \myset{L}$.
The probability that a random vector of weight $t$ falls into $\myset{L}$ is $q=\sfrac{\binom{n_1}{t_1}\binom{n_2}{t_2}}{\binom{n}{t}}$. 
Thus, the expected number of rows of $\mat{P}$ in $\myset{L}$ is $r'=qr$, since each row of $\mat{P}$ is independently sampled at random.

To recover the expected $r'$ rows, we construct two sets $\myset{L}_1$ and $\myset{L}_2$, where $\myset{L}_1$ contains all $\binom{n}{t_1}$ sums of $t_1$ distinct rows from the upper half of $\mat{G}$ and $\myset{L}_2$ contains all $\binom{n}{t_2}$ possible sums of $t_2$ distinct rows from the bottom half of $\mat{G}$:
$\myset{L}_1 = \{ (a_1,...,a_{t_1},\vec{sum}_1) \mid 1\leq a_1 < \cdots < a_{t_1} \leq n, \vec{sum}_1=\sum_{i=1}^{t_1} \mat{G}_{a_i} \}$,
$\myset{L}_2 = \{ (a_{t_1+1},...,a_{t},\vec{sum}_2) \mid 1 \leq a_{t_1+1} < \cdots < a_{t} \leq n, \vec{sum}_2=\sum_{i=t_1+1}^{t} \mat{G}_{a_i} \},
$
where $\mat{G}_{a_i}$ denotes the $a_i$-th row of $\mat{G}$. 
We then employ the MERGE-JOIN algorithm~\cite{becker2012decoding} (a MITM implementation) to construct the target set
$$\myset{L}=\{(a_1,...a_{t_1},a_{t_1+1},...,a_t) \mid \vec{sum}_1 + \vec{sum}_2 = \vec{0} \},$$
where $(a_1,...,a_{t_1},\vec{sum}_1) \in \myset{L}_1$ and $(a_{t_1+1},...,a_{t},\vec{sum}_2) \in \myset{L}_2$. Thus, $\myset{L}$ stores the $r'$ rows of the secret key $\mat{P}$.

To improve efficiency, we reduce $\myset{L}_1$ and $\myset{L}_2$ to smaller sets, ensuring the expected number of recovered rows from $\mat{P}$ is $l < r'$:

\begin{itemize}
\item When $t$ is \textbf{even}, we randomly choose subsets $\myset{L}_i' \subset \myset{L}_i$ for $i=1,2$, with $|\myset{L}_1'|=\lceil \frac{1}{\sqrt{r'/l}} \binom{n_1}{t/2} \rceil$ and $|\myset{L}_2'|=\lceil \frac{1}{\sqrt{r'/l}} \binom{n_2}{t/2} \rceil$. Using MITM, we construct:
$$\myset{L}' = \{ (a_1,...,a_t) \mid \vec{sum_1} + \vec{sum_2} = \vec{0} \} \subseteq \myset{L},$$
where $(a_1,...,a_{t_1},\vec{sum_1}) \in \myset{L}_1'$ and $(a_{t_1+1},...,a_{t},\vec{sum_2}) \in \myset{L}_2'$. The expected number of rows of $\mat{P}$ in $\myset{L}'$ is $\frac{qr}{\sqrt{r'/l}\sqrt{r'/l}}=l$.

\item When $t$ is \textbf{odd}, we tailor $\myset{L}_1'$ and $\myset{L}_2'$ according to their sizes. Let $|\myset{L}_1'|=\alpha \binom{n/2}{t_1}$ and $|\myset{L}_2'|=\beta \binom{n/2}{t_2}$, where $\alpha=\sqrt{\frac{t_1}{(n/2-t_2)r'/l}}$ and $\beta=\frac{(n/2-t_2)\alpha}{t_1}$, ensuring $|\myset{L}_1'|=|{\myset{L}_2'}|$ and $\alpha \beta = l/r'$. Similarly, we construct $\myset{L}'$ via MITM, with the expected number of rows of $\mat{P}$ in $\myset{L}'$ equal to $\alpha\beta r' = l$.
\end{itemize}

\mypara{Stage-II: Distinguishing Vectors}
After recovering the rows $\{\vec{v}^{(1)},\cdots,\vec{v}^{(l)}\}$ of the secret key $\vec{P}$, we now can determine whether vectors $\{\vec{x}^{(1)},...,\vec{x}^{(m)}\}$ are \ldpc-encoded codewords or plain vectors. 
We assume these $m$ vectors are sampled independently. 

We first compute $\{\langle\vec{v}^{(j)},\vec{x}^{(i)}+\vec{z}\rangle\}_{1 \le j \le l, 1 \le i \le m}$ and then calculate the ratio of zeros in the result set. 
We can choose a threshold $\tau$ and conclude that the $m$ vectors are \ldpc-encoded if the ratio exceeds $\tau$, or plain vectors otherwise.

If the $m$ vectors are \ldpc-generated, i.e., $\vec{x}^{(i)}+\vec{z}=\mat{G}\vec{s}+\vec{e}^{(i)}$ with $\vec{e}^{(i)} \leftarrow \Ber (n,\omega)$, then
$$p \coloneqq \Pr[\langle\vec{v}^{(j)},\vec{x}^{(i)}+\vec{z}\rangle=0] = \sum_{\tilde{j}~\text{is}~\text{even}} \binom{t}{\tilde{j}} \cdot \omega^{\tilde{j}} \cdot (1-\omega)^{t-\tilde{j}}.$$
As proved in \cite{christ2024pseudorandom}, there exists a constant $\delta >0$ such that $p > 1/2+\delta$. 
Then, we can calculate the expected True Positive Rate (TPR), i.e., the probability of correctly identifying $m$ \ldpc-encoded vectors, as:
$$\mathsf{TPR} = \sum_{{j} \ge \tau \cdot ml} \binom{ml}{{j}} \cdot p^{{j}} \cdot (1-p)^{ml-{j}},$$
and the False Positive Rate (FPR), i.e., the probability of incorrectly identifying $m$ plain vectors as encoded, by:
$$\mathsf{FPR} = \sum_{{j} \ge \tau \cdot ml} \binom{ml}{{j}} \cdot \left(\frac{1}{2}\right)^{ml}.$$
To this point, we can choose $m$ and $\tau$ to achieve target TPR and FPR values. For example, to achieve $\mathsf{TPR}=1-\negl(\lambda)$ and $\mathsf{FPR}=\negl(\lambda)$, we can set $l, m$ such that $ml=r$ and $\tau=\frac{1}{2}+r^{-1/4}$, aligning with the $\Decode()$ function in the original~\Cref{def:pub_prc}.

\subsubsection{Complexity of Attack-I}

We then analyze the time and data complexity of Attack-I based on the MITM partial recovery and the distinguishing procedure.

\begin{thm}[Complexity of Attack-I]
\label{thm:partial_recovery}
Given a public key $\mat{G}$ and $m = \Theta(n)$ input vectors, there exists an adversary $\mathcal{A}$ that can distinguish against the undetectability property of \ldpc with advantage $| \Pr[\mathsf{Enc}] - \Pr[\mathsf{Uni}] | \ge 1 - \negl(\lambda)$ in time complexity
\begin{equation}
    \label{equ:comp_attack3_thm}
    \pr{T}_{\mypartial} = O(1) \cdot g \cdot \alpha \cdot \binom{n/2}{\lceil t/2 \rceil}
\end{equation}
and data complexity $\Theta(n)$, where $\alpha = 1 / \sqrt{qr}$ for even $t$ and $\alpha = \sqrt{\frac{\lceil t/2 \rceil}{qr(n/2-\lfloor t/2 \rfloor)}}$ for odd $t$, with $q=\sfrac{\binom{n/2}{\lceil t/2 \rceil}\binom{n/2}{\lfloor t/2 \rfloor}}{\binom{n}{t}}$.
\end{thm}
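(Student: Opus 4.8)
The plan is to decompose the claim into three independent parts: (i) the time complexity of constructing the vector set $\mathcal{V}$ via the Meet-in-the-Middle procedure; (ii) the correctness of the expected recovery count $l$ and its relation to the subset sizes $|\mathscr{L}_i'|$; and (iii) the distinguishing advantage once $l$ rows of $\mat{P}$ are recovered. For part (i), I would observe that sorting $\mathscr{L}_1'$ and $\mathscr{L}_2'$ and running \texttt{MERGE-JOIN} costs $\widetilde{O}(|\mathscr{L}_1'| + |\mathscr{L}_2'| + |\mathscr{L}'|)$ vector operations, each of cost $O(g)$ since vectors live in $\GF{2}^g$. By the chosen balancing ($|\mathscr{L}_1'| = |\mathscr{L}_2'|$ in both the even and odd cases), the dominant term is $g \cdot |\mathscr{L}_1'|$, and $|\mathscr{L}_1'| = \lceil \alpha \binom{n/2}{\lceil t/2 \rceil} \rceil$ with $\alpha$ as defined; the output list $\mathscr{L}'$ has expected size $l = O(1)$ (we take $l$ constant, or $l$ such that $ml = r$ with $m = \Theta(n)$, so $l = \Theta(1)$), hence does not dominate. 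This yields the stated $\pr{T}_{\mypartial} = O(1)\cdot g \cdot \alpha \cdot \binom{n/2}{\lceil t/2\rceil}$.

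For part (ii), I would verify the probabilistic bookkeeping already sketched in the body: a uniformly random weight-$t$ vector lands in $\mathscr{L}$ with probability $q = \binom{n/2}{t_1}\binom{n/2}{t_2}/\binom{n}{t}$, so $r$ independent rows of $\mat{P}$ contribute $r' = qr$ expected rows to $\mathscr{L}$; restricting each half-list to a $\frac{1}{\sqrt{r'/l}}$-fraction (even $t$) or to fractions $\alpha,\beta$ with $\alpha\beta = l/r'$ (odd $t$) multiplies the survival probability of any fixed row of $\mat P$ lying in $\mathscr{L}$ by exactly $\alpha\beta = l/r'$, giving expected yield $\alpha\beta r' = l$. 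One must also check that a genuine row $\vec v^{(j)}$ of $\mat P$, if its two halves are kept, is actually \emph{found} by \texttt{MERGE-JOIN} — this is immediate because $\vec v^{(j)}\mat G = \vec 0$ forces $\vec{sum}_1 + \vec{sum}_2 = \vec 0$ for the corresponding index tuples. A subtlety worth a sentence: spurious tuples with $\vec{sum}_1+\vec{sum}_2 = \vec 0$ that are \emph{not} rows of $\mat P$ may also appear, but since any $\vec v$ with $\vec v \mat G = 0$ is a valid parity check, such $\vec v$ still satisfies $\Pr[\langle \vec v, \mat G \vec s + \vec e\rangle = 0] = p > 1/2 + \delta$ (it lies in the dual code), so these do not harm the distinguisher — they only help it; one just uses the $l$ recovered vectors regardless of provenance.

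For part (iii), I would invoke the bias result of \cite{christ2024pseudorandom}: for each recovered dual vector of weight $t$ and each PRC codeword $\vec x^{(i)}$, $\Pr[\langle \vec v^{(j)}, \vec x^{(i)} + \vec z\rangle = 0] = p = \sum_{\tilde j\ \mathrm{even}} \binom{t}{\tilde j}\omega^{\tilde j}(1-\omega)^{t-\tilde j} > 1/2 + \delta$ for a constant $\delta > 0$, whereas for a uniform plain vector the inner product is uniform so the probability is exactly $1/2$. Treating the $ml$ inner products as (essentially) independent Bernoulli trials, a Chernoff/Hoeffding bound with threshold $\tau = 1/2 + r^{-1/4}$ and $ml = \Theta(n)$ gives $\mathsf{TPR} \ge 1 - \negl(\lambda)$ and $\mathsf{FPR} \le \negl(\lambda)$, so the advantage is $1 - \negl(\lambda)$; data complexity is the $m = \Theta(n)$ queried vectors. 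The main obstacle — and the place I would spend the most care — is the independence assumption in part (iii): the $l$ recovered vectors $\vec v^{(j)}$ share rows of $\mat G$ and are jointly correlated with each $\vec e^{(i)}$, so the $\{\langle \vec v^{(j)}, \vec x^{(i)} + \vec z\rangle\}_j$ are not literally independent for fixed $i$. I would handle this either by relying only on independence \emph{across} the $m$ codewords (which is genuine) and a per-codeword majority vote over the $l$ checks whose expectation is still bounded away from $1/2$, or by noting that the variance of the zero-count is still $O(ml)$ so a second-moment/Chebyshev argument suffices to separate the two regimes for $m = \Theta(n)$ — a constant-factor weakening that does not change the asymptotic complexity or the claimed advantage.
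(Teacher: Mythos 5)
Your proposal is correct and follows essentially the same route as the paper's proof: the MITM list construction with balanced sublists of size $\alpha\binom{n/2}{\lceil t/2\rceil}$ dominating the cost at a factor of $g$ per vector operation, expected yield $\alpha\beta r' = l = O(1)$ with $m = \Theta(n)$ codewords, and the weight-$t$ dual-vector bias $p > 1/2+\delta$ driving the threshold distinguisher. Your two added observations---that spurious collisions in $\mathscr{L}'$ are still weight-$t$ vectors in the dual of $\mat{G}$ and hence exhibit the same bias, and that the $l$ parity checks applied to a fixed codeword are not mutually independent so the concentration argument should lean on independence across the $m$ codewords (or a second-moment bound)---are points the paper leaves implicit, and they strengthen rather than alter the argument.
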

\begin{proof}
The data complexity is clearly $m = \Theta(n)$. 
For the time complexity, we can set $l = r/m = O(1)$ to achieve advantage $1 - \negl(\lambda)$ as discussed. 
Attack-I then takes time $\pr{T}_{\mypartial} + \Theta(n)$, where $\pr{T}_{\mypartial}$ is the time to recover rows of $\mat{P}$ and $\Theta(n)$ is the distinguishing complexity. Note that $\pr{T}_{\mypartial} = g \cdot \max\{|\myset{L}_1'|, |\myset{L}_2'|\}$, where the $g$ factor comes from summing rows of $\mat{G}$.
When $t$ is \textbf{even}, $|\myset{L}_1'|=|\myset{L}_2'|= \frac{1}{\sqrt{r'}} \binom{n/2}{t/2}$, so $\pr{T}_{\mypartial}^{\rm even} = \frac{g}{\sqrt{r'}} \cdot \binom{n/2}{t/2}.$
When $t$ is \textbf{odd}, $|\myset{L}_1'|=\alpha \binom{n/2}{t_1}$ and $|\myset{L}_2'|=\beta \binom{n/2}{t_2}$. To minimize MITM complexity, we set $|\myset{L}_1'|=|\myset{L}_2'|$, i.e., $\alpha \binom{n/2}{t_1}= \beta \binom{n/2}{t_2}$, with $\alpha \beta = l/r' = O(1)/r'$. Thus, $\pr{T}_{\mypartial}^{\rm odd} = g \cdot \alpha \cdot \binom{n/2}{t_1}$, where $\alpha = \sqrt{\frac{\lceil t/2 \rceil}{qr(n/2-\lfloor t/2 \rfloor)}}$.
\end{proof}





\subsection{Attack-II: Weak Key Distinguisher}
\label{sec:attackiv_weak_ket_disting}
In the following, we present an efficient \textit{multi-target} distinguishing attack that exploits an implementation vulnerability in the $\KeyGen()$ procedure rather than breaking the underlying \ldpc primitive. 
In a multi-target attack, the adversary attempts to compromise any one vulnerable key from a set of multiple public keys.
Specifically, the current PRC-based watermarking schemes (e.g., \llm~\cite{christ2024pseudorandom} and \gim~\cite{DBLP:conf/iclr/GunnZS25}) use \Cref{alg:watermark_keygeneration} or its variant to generate the random matrix $\mat{P} \overset{\$}{\leftarrow} \GF{2}^{r \times n}$, where each row of $\mat{P}$ has weight $t$. 
However, we find that rows of $\mat{G}$ can exhibit relationships that directly facilitate watermark distinguishing. 
This attack focuses on the simplest case where some rows of $\mat{G}$ are identical, and we call such $\mat{G}$ a weak key. 
The attack procedure is shown in~\Cref{fig:workflow_weak_key} and summarized in \Cref{alg:attack3_weakey}.

\subsubsection{Pipeline of Attack-II}

\mypara{Stage-I: Identifying Weak Keys}
Weak keys can be identified by detecting whether $\mat{G}$ contains identical rows, which can be accomplished in $O(ng)$ time per key using hash tables.
We now analyze the probability that a weak key is generated in \llm~\cite{christ2024pseudorandom} and \gim~\cite{DBLP:conf/iclr/GunnZS25}.

Let $\pr{P}_{\weak}^{\rm LLM}$ denote the weak key probability for the LLM scheme \llm~\cite{christ2024pseudorandom}. 
Since $\mat{G}$ is generated using~\Cref{alg:watermark_keygeneration}, recall that it first samples a uniformly random matrix $\mat{G}_0$ with $n-r$ rows, then constructs the remaining $r$ rows iteratively. 
Index these $r$ rows from $1$ to $r$. 
Ignoring the permutation ${\Pi}$, each of the last $r$ rows of $\mat{G}$ is the sum of $t-1$ rows from the first $n-r$ rows of $\mat{G}_0$. 
The number of possible distinct rows is $N = \binom{n-r}{t-1}$.
Let $p_i$ denote the probability that the first $i$ rows are distinct. 
Then $p_1 = 1$ and $p_{i+1} = p_i \cdot \frac{N-i}{N}$. 
Thus, $p_r = \frac{N(N-1)\cdots(N-r+1)}{N^r}$, and
\begin{equation}
\label{equ:pweak_llm}
    \pr{P}_{\weak}^{\rm LLM} = 1 - \frac{N(N-1)\cdots(N-r+1)}{N^r}.
\end{equation}

For $\pr{P}_{\weak}^{\rm GIM}$, the key generation implementation differs from~\Cref{alg:watermark_keygeneration} and has a larger enumeration space. 
The $i$-th row is a combination of the previous $i-1$ rows and the $n-r$ rows of $\mat{G}_0$. 
Let $p_i'$ denote the probability that the first $i$ rows are distinct. 
Then $p_1' = 1$ and $p_{i+1}' = p_i' \cdot \left(1 - \frac{i}{\binom{n-r+i}{t-1}}\right)$. 
Therefore
\begin{equation}
\label{equ:pweak_gim}
    \pr{P}_{\weak}^{\rm GIM} = 1 - \prod_{i=1}^{r} \left(1 - \frac{i-1}{\binom{n-r+i-1}{t-1}}\right).
\end{equation}

\begin{figure}[t]
\centering
\includegraphics[width=\linewidth]{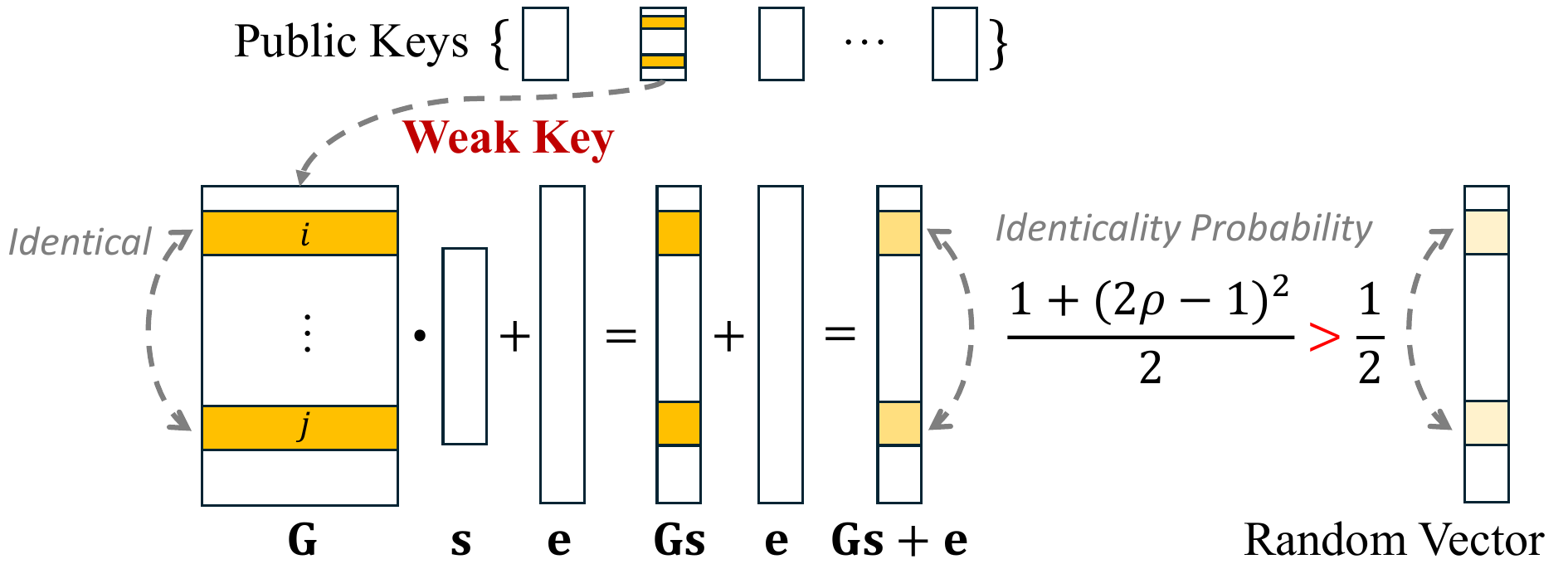}
\caption{The workflow of Weak Key Distinguisher.}
\label{fig:workflow_weak_key}
\end{figure}

\mypara{Stage-II: Distinguishing Vectors}
Assume we have $l$ pairs of row indices $\{(\alpha_j,\beta_j)\}$ such that $\mat{G}_{\alpha_j} = \mat{G}_{\beta_j}$ for all $1 \le j \le l$. 
For any $\vec{s} \in \GF{2}^{g \times 1}$, the $\alpha_j$-th and $\beta_j$-th positions of $\mat{G}\vec{s}$ are always equal. 
This holds because $\mat{G}\vec{s}$ computes a linear combination of the columns of $\mat{G}$, and identical rows ensure identical positions in the result.

Although the output $\mat{G}\vec{s} + \vec{e}$ is perturbed by the noise vector $\vec{e}$, the adversary can construct a distinguisher with constant advantage. 
Let $\vec{e}_{\alpha_j} \leftarrow \Ber(\rho)$ and $\vec{e}_{\beta_j} \leftarrow \Ber(\rho)$ with $\rho < \frac{1}{2}$. 
The probability that positions $\alpha_j$ and $\beta_j$ in $\mat{G}\vec{s}+\vec{e}$ are equal is $\rho^2 + (1-\rho)^2 = \frac{1+(2\rho-1)^2}{2}$. 
For a random vector, this probability is $\frac{1}{2}$. 
The statistical difference is $\frac{(2\rho-1)^2}{2}$, which is a constant number.

Using a threshold $\tau$ similar to~\Cref{sec:partial_sk_recovery}, we can distinguish $\mat{G}\vec{s}+\vec{e}$ from a random vector in linear time, breaking the undetectability of \ldpc. 
In practice, we can collect vectors $\{\vec{x}^{(1)},\cdots,\vec{x}^{(m)}\}$ from the oracle and calculate the ratio of vectors where positions $\alpha_j$ and $\beta_j$ are equal. 
If this ratio exceeds $\tau$, we conclude that the oracle uses \ldpc encoding. 
When $ml = r$ and $\tau = \frac{1}{2} + r^{-1/4}$, the distinguishing advantage is overwhelmingly high due to a deduction same to that in~\Cref{sec:partial_sk_recovery}.

\subsubsection{Complexity of Attack-II}

We then proceed to evaluate the time and data complexity of Attack-II based on the probability of encountering a weak key and the distinguishing procedure.

\begin{thm}[Time Complexity of Attack-II]
\label{thm:weakey_dis}
There exists an adversary $\mathcal{A}$ such that, for a \emph{fixed} key, runs in time $O(1)\cdot n \cdot g$.
If the key is weak, $\mathcal{A}$ correctly identifies it and achieves a
distinguishing advantage
$| \Pr[\mathsf{Enc}] - \Pr[\mathsf{Uni}] | \ge 1 - \negl(\lambda)
$
using $\Theta(n)$ input vectors.

Since a weak key is encountered after $\pr{P}^{-1}$ independent keys in expectation,
the total expected time complexity is
\begin{equation}
\label{equ:comp_attack4}
    \pr{T}_{\dis} = O(1) \cdot \pr{P}^{-1} \cdot n \cdot g,
\end{equation}
where $\pr{P} = \pr{P}_{\mathsf{weak}}^{\rm LLM}$ in~\Cref{equ:pweak_llm} for \llm and
$\pr{P} = \pr{P}_{\weak}^{\rm GIM}$ in~\Cref{equ:pweak_gim} for \gim.
\end{thm}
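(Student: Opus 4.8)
The plan is to verify the three claims separately: (a) the running time on a single fixed key, (b) the distinguishing advantage when that key happens to be weak, and (c) the expected total time to encounter a weak key in the multi-target model. For (a), on input $(\mat{G},\vec{z})$ the adversary hashes the $n$ rows of $\mat{G}$ — each a $g$-bit string — into a hash table in time $O(ng)$; any hash collision immediately reveals an index pair $(\alpha_j,\beta_j)$ with $\mat{G}_{\alpha_j}=\mat{G}_{\beta_j}$, so the full set $\mathcal{I}$ is obtained within this bound, and if the key is weak (i.e.\ $\mathcal{I}\neq\emptyset$) it is correctly identified here. The subsequent loop over $\mathcal{I}\times\mathscr{X}$ costs $O(|\mathcal{I}|\cdot m)$, which is $O(n)$ once we retain a single collision pair (shown to suffice below). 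Thus the per-key cost is $O(1)\cdot n\cdot g$.

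For (b), fix a weak key and a pair $(\alpha,\beta)\in\mathcal{I}$. Because $\mat{G}_\alpha=\mat{G}_\beta$, for every $\vec{s}$ we have $(\mat{G}\vec{s})_\alpha=(\mat{G}\vec{s})_\beta$; hence for an \ldpc-codeword $\vec{x}=\mat{G}\vec{s}+\vec{e}+\vec{z}$ the coordinates $x_\alpha+z_\alpha$ and $x_\beta+z_\beta$ agree exactly when $e_\alpha=e_\beta$, which — as $e_\alpha,e_\beta$ are independent Bernoulli at rate $\omega<1/2$ — happens with probability $p:=\omega^2+(1-\omega)^2=\tfrac12+\tfrac{(1-2\omega)^2}{2}$, a constant strictly above $1/2$, whereas for a uniform vector the probability is exactly $1/2$. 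Iterating over the $m$ target vectors produces $N_{\mathsf{tot}}=m|\mathcal{I}|$ independent indicator trials; choosing $m$ so that $N_{\mathsf{tot}}=\Theta(r)$ (feasible with $m=\Theta(n)=\Theta(r)$ even when $|\mathcal{I}|=1$) and the threshold $\tau=\tfrac12+r^{-1/4}$ as in the original $\Decode$, a Chernoff/Hoeffding bound gives that $\Pr[N/N_{\mathsf{tot}}<\tau]$ under \ldpc-encoding and $\Pr[N/N_{\mathsf{tot}}\ge\tau]$ under the uniform oracle are both $\exp(-\Omega(\sqrt r))=\negl(\lambda)$, since $p-\tau=\Theta(1)$ and $\tau-\tfrac12=r^{-1/4}$. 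This is the same concentration argument as in \Cref{sec:partial_sk_recovery}, and it yields $\Pr[\mathsf{Enc}]\ge1-\negl(\lambda)$, $\Pr[\mathsf{Uni}]\le\negl(\lambda)$, hence advantage $\ge1-\negl(\lambda)$ with data complexity $\Theta(n)$.

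For (c), in the multi-target game the adversary scans independently generated public keys, spending the $O(ng)$ hash test on each, discarding those with $\mathcal{I}=\emptyset$, and running the distinguisher of part (b) on the first key with $\mathcal{I}\neq\emptyset$. Each key is weak independently with probability $\pr{P}$, equal to $\pr{P}_{\weak}^{\rm LLM}$ of \eqref{equ:pweak_llm} for the LLM scheme \cite{christ2024pseudorandom} and to $\pr{P}_{\weak}^{\rm GIM}$ of \eqref{equ:pweak_gim} for the GIM scheme \cite{DBLP:conf/iclr/GunnZS25}; therefore the number of keys examined before the first weak one is geometrically distributed with mean $\pr{P}^{-1}$. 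Summing the per-key cost over this geometric number of trials and adding the $\Theta(n)$ cost of the final distinguishing step gives $\pr{T}_{\dis}=O(1)\cdot\pr{P}^{-1}\cdot n\cdot g$. I expect the only delicate point to be part (b): one must check that a single collision pair already supplies enough independent $\mathrm{Ber}$ trials, which holds because the $m=\Theta(r)$ target vectors furnish $\Theta(r)$ trials — exactly the $ml=r$ budget that matches the original decoder's threshold $\tau=1/2+r^{-1/4}$.
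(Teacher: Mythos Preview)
Your proposal is correct and follows essentially the same approach as the paper: hash the $n$ rows of $\mat{G}$ in $O(ng)$ time to detect duplicates, exploit the bias $p=\omega^2+(1-\omega)^2>1/2$ at the duplicated positions over $ml=\Theta(r)$ trials with threshold $\tau=1/2+r^{-1/4}$, and then account for the geometric waiting time $\pr{P}^{-1}$ to hit a weak key. Your write-up is in fact more detailed than the paper's (which simply refers back to the analogous analysis in \Cref{sec:partial_sk_recovery}); the only cosmetic difference is that you explicitly subtract the one-time pad $\vec{z}$ before comparing coordinates, whereas the paper's Algorithm~\ref{alg:attack3_weakey} compares $x_{\alpha_j}^{(i)}$ and $x_{\beta_j}^{(i)}$ directly, but since $\vec{z}$ is public this is immaterial.
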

\begin{proof}
The expected cost of identifying a weak key is $O(1) \cdot \pr{P}^{-1} \cdot n \cdot g$, where $O(n \cdot g)$ is the cost of detecting duplicate rows in a single key, and the $\pr{P}^{-1}$ is the number of expected keys. 
Given a weak key $\mat{G}$, the adversary can distinguish $\mat{G}\vec{s}+\vec{e}$ from a random vector in linear time. 
A distinguishing advantage of $1 - \negl(\lambda)$ is achieved by processing $\Theta(n)$ input vectors. 
Therefore, the total expected time complexity is $\pr{T}_{\dis} = O(1) \cdot \pr{P}^{-1} \cdot n \cdot g + \Theta(n)$. 
The $\Theta(n)$ term is omitted from the final complexity as it is dominated by the first term.
\end{proof}


\subsection{Attack-III: Noise Overlay Attack}
\label{subsec:msg_recovery}
For PRCs, the noise vector is modeled as a Bernoulli noise channel as defined in~\Cref{def:prc}. 
This noise can originate from multiple sources, such as the inherent noise in LLM or GIM applications. 
For instance, in LLM watermarking, noise arises during the sampling of $t_i$ from $(p_i,x_i)$ by~\Cref{alg:TokenSample}, while in GIM, noise is introduced by the noise inversion procedure (i.e., $z(\cdot)$ of~\Cref{alg:GIMInv}). 
However, we highlight that an adversary may also actively modify content to evade detection. 
To investigate this threat, we propose a noise overlay attack that challenges the robustness of PRCs. 
The attack procedure is shown in~\Cref{fig:workflow_attack_1} and summarized in \Cref{alg:attack4_erroroverlay}.

\begin{figure}[b]
\centering
\includegraphics[width=0.9\linewidth]{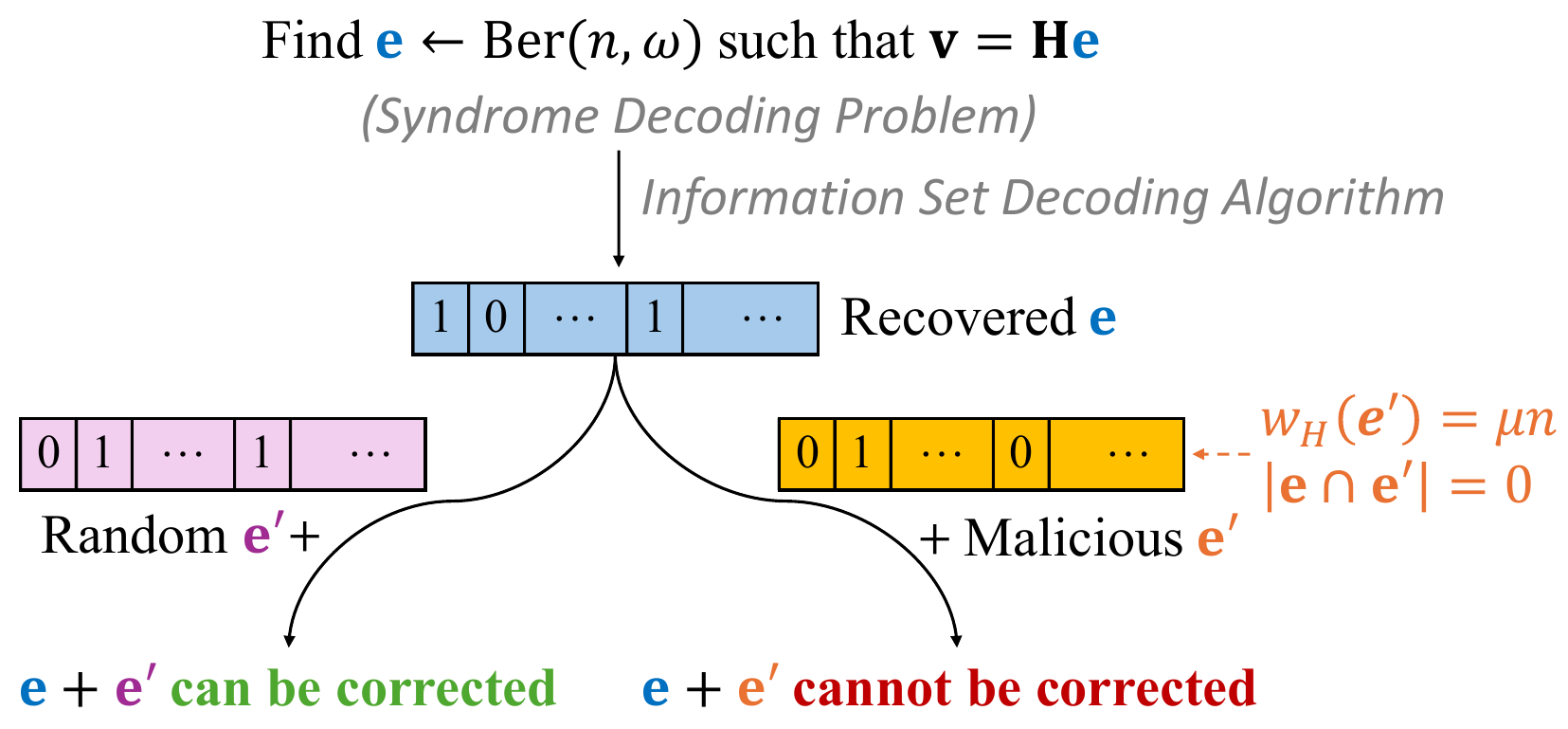}
\caption{The workflow of Noise Overlay Attack.}
\label{fig:workflow_attack_1}
\end{figure}

\subsubsection{Pipeline of Attack-III}

Given an encoded vector $\vec{x}=\mat{G}\vec{s}+\vec{e}+\vec{z}$ and public key $\mathsf{pk}=(\mat{G},\vec{z})$, the goal of Attack-III is to first recover the noise vector $\vec{e}$, then construct a malicious noise vector $\vec{e}'$ such that $\vec{x}+\vec{e}'$ cannot be successfully decoded. 
Since \cite{christ2024pseudorandom} does not specify an explicit noise rate bound, our objective is to maximize the resulting noise rate by adding $\vec{e}'$ with a \textit{fixed} noise rate.
We assume the decoding process fails once the final noise rate exceeds the designed threshold. 
Note that such a malicious noise attack is non-trivial because, under the same fixed noise rate, a PRC-based codeword with additional random noise of identical weight can be successfully decoded with high probability, while our malicious noise vector causes decoding to fail. 
We now detail the two steps of the attack.

\begin{itemize}
\item \textbf{Step I: Noise Recovery.}
Upon receiving $\vec{x}$, we first compute $\vec{x}'=\vec{x}+\vec{z}=\mat{G}\vec{s}+\vec{e}$. 
Let $\mat{H} \in \GF{2}^{(n-g) \times n}$ be the dual parity-check matrix of $\mat{G}\in \mathbb{F}^{n\times g}_2$ satisfying $\mat{HG}=\mat{0}$. 
Next, we can get $\vec{v} = \mat{Hx'}=\mat{HGs}+\mat{He}=\mat{He}$. 
At this stage, we need to find a vector $\vec{e}\leftarrow \Ber(n,\omega)$ satisfying $\mat{He}=\vec{v}$.
This reduces to the classical Syndrome Decoding (SD) problem~\cite{berlekamp2003inherent}, which is computationally equivalent with the LPN problem in \ldpc. 
Consequently, it can be solved using ISD algorithms~\cite{prange1962use,dumer1991minimum,stern1988method,may2011decoding,becker2012decoding,may2015computing}. 
Although the exact weight of $\vec{e}$ is unknown, we consider the upper bound of error-correcting capability where the noise rate equals $\frac{1}{2}-\epsilon$, as calculated by the revised inequality in~\Cref{lem:t_choice}.

\item \textbf{Step II: Noise Overlay.}
Recall that the received vector is $\vec{x}'=\mat{G}\vec{s}+\vec{e}$ where $\vec{e} \leftarrow \Ber(n,\omega)$. 
If we add another random $\vec{e'} \leftarrow \Ber(n,\mu)$ to $\vec{x}'$, where $\vec{e}$ and $\vec{e}'$ are independent, then $\Pr[e_i+e'_i=1] = \omega+\mu-2\omega\mu$ and $\mathbb{E}[\wt{\vec{e}+\vec{e'}}]=(\omega+\mu-2\omega\mu)n$. 
However, once $\vec{e}$ is recovered, we can maliciously construct a special $\vec{e}'$ satisfying both $\wt{\vec{e'}}=\mu n$ and $|\vec{e} \cap \vec{e}'|=0$, such that
$$\mathbb{E}[\wt{\vec{e}+\vec{e'}}]=(\omega+\mu)n \gg (\omega+\mu-2\omega\mu)n.$$
Assume \ldpc is designed to tolerate additional random noise with rate $\mu_0$, which corresponds to an overall noise rate of $(\omega + \mu_0 - 2\omega\mu_0)$.
Hence, we can set $\mu = \mu_0$ in our attack.
While the resulting randomly combined noise remains detectable, the adversarially constructed noise vector $\vec{e} + \vec{e}'$ has weight $(\omega + \mu_0)n$, exceeding the decoding capability of the \ldpc.
Therefore, introducing $\vec{e}'$ successfully causes decoding failures.
\end{itemize}

\subsubsection{Complexity of Attack-III}
In this part, we derive the time and data complexity of Attack-III in Theorem~\ref{thm:error_recover}.
In particular, we adopt the Prange algorithm~\cite{prange1962use} to estimate the time complexity.

\begin{thm}[Complexity of Attack-III]
\label{thm:error_recover}
Given a public key $\mathsf{pk}$ and an encoded vector $\mat{x}$, there exists an adversary that can construct $\vec{x}''=\echannel(\vec{x})$ to achieve $\mathsf{Decode}(1^\lambda,\mathsf{sk},\vec{x}'') \neq \mathsf{m}$ with time complexity
\begin{equation}
\label{equ:comp_attack1}
    \pr{T}_{\ISD} = \left(\frac{1}{2}+\epsilon\right)^{-g} \cdot n^{3}.
\end{equation}
\end{thm}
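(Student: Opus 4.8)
The plan is to charge the running time of Attack-III (\Cref{alg:attack4_erroroverlay}) to its two steps, since the fact that the produced $\vec{x}''$ makes $\Decode$ fail was already argued in the Step~I/Step~II discussion above together with \Cref{lem:t_choice}. First I would dispose of the cheap parts. After line~1 we have $\vec{x}'=\mat{G}\vec{s}+\vec{e}$, and lines~3--7 only use $\mathsf{supp}(\vec{e})$ to build $\vec{e}'$ with $w_H(\vec{e}')=\mu n$ and $\mathsf{supp}(\vec{e})\cap\mathsf{supp}(\vec{e}')=\varnothing$, and output $\vec{x}''=\vec{x}+\vec{e}'$; these steps cost $O(n)$, and precomputing a matrix $\mat{H}$ with $\mat{HG}=\mat{0}$ from $\mat{G}$ is a one-time $O(n^3)$ Gaussian elimination. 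For correctness I would note that $w_H(\vec{e}')=\mu n$ is the typical weight of the channel noise $\Ber(n,\mu)$, so $\vec{x}''$ is a legitimate value of $\echannel(\vec{x})$; that $\mat{PG}=\mat{0}$ gives $\mat{P}(\vec{x}''+\vec{z})=\mat{P}(\vec{e}+\vec{e}')$ with $w_H(\vec{e}+\vec{e}')=(\omega+\mu)n$; and that $\mu$ is chosen in \Cref{lem:t_choice} precisely so that an error of weight $(\omega+\mu)n$ forces $w_H(\mat{P}(\vec{e}+\vec{e}'))>(1/2-r^{-1/4})r$ with overwhelming probability, hence $\Decode(1^\lambda,\mathsf{sk},\vec{x}'')=\perp\neq\mathsf{m}$.

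The heart of the proof is the cost of the ISD call on line~2, which I would instantiate with Prange's algorithm~\cite{prange1962use} on the syndrome-decoding instance $\mat{H}\vec{e}=\vec{v}:=\mat{H}\vec{x}'$ over $\mat{H}\in\GF{2}^{(n-g)\times n}$. One Prange iteration samples a uniformly random information set $I$ of size $g$, brings the complementary $n-g$ columns of $\mat{H}$ to systematic form by Gaussian elimination (cost $O(n^3)$; this is a loose but sufficient bound), reads off the unique error candidate supported on $[n]\setminus I$, and accepts it when its weight is at most the a priori bound $w\le(1/2-\epsilon)n$ supplied by \Cref{lem:t_choice}. Such an iteration succeeds exactly when $\mathsf{supp}(\vec{e})\cap I=\varnothing$, which happens with probability
\[
\frac{\binom{n-w}{g}}{\binom{n}{g}}=\prod_{i=0}^{g-1}\frac{n-w-i}{n-i}\;\ge\;\Bigl(\tfrac{1}{2}+\epsilon-\tfrac{g}{n}\Bigr)^{g}=\Bigl(\tfrac{1}{2}+\epsilon\Bigr)^{g}(1-o(1)),
\]
using $w\le(1/2-\epsilon)n$, $g=o(n)$, and monotonicity in $w$ (so $w=(1/2-\epsilon)n$ is the worst case). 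Thus the expected number of iterations is at most $(1/2+\epsilon)^{-g}(1+o(1))$, and multiplying by the $O(n^3)$ per-iteration cost and absorbing lower-order factors gives exactly $\pr{T}_{\ISD}=(1/2+\epsilon)^{-g}\cdot n^3$.

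The step I expect to be the main obstacle is justifying that line~2 returns the \emph{true} $\vec{e}$, not merely some vector of the same weight: at relative weight $1/2-\epsilon$ one is generically beyond the unique-decoding radius of the code $\mathcal{C}$ generated by $\mat{G}$, so a lucky iteration could return $\tilde{\vec{e}}=\vec{e}+\mat{G}\vec{u}$ with $\vec{u}\neq\vec{0}$, and then an $\vec{e}'$ disjoint from $\tilde{\vec{e}}$ need not be disjoint from $\vec{e}$, which would weaken Step~II. I would close this either (i) by a counting bound showing that for the parameters deployed in \ldpc\ the number of codewords of $\mathcal{C}$ of weight $\le(1/2-\epsilon)n$ is $\approx 2^{\,g-\Theta(n\epsilon^{2})}\ll 1$, so $\vec{e}$ is with overwhelming probability the unique solution of weight $\le(1/2-\epsilon)n$ and Prange can only return it; or (ii) by running Prange for increasing target weights and returning the first solution found, which recovers the sparse true $\vec{e}$ exactly, with the total cost dominated by the largest stage and still bounded by $(1/2+\epsilon)^{-g}\cdot n^3$ (and in fact far smaller once the effective noise rate is pinned down). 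A secondary care-point is that $\vec{e}+\vec{e}'$ is not a uniform weight-$(\omega+\mu)n$ vector, so in invoking \Cref{lem:t_choice} I would use that the rows of $\mat{P}$ are essentially uniform weight-$t$ vectors independent of the adversary's $\vec{e}'$, which makes $w_H(\mat{P}(\vec{e}+\vec{e}'))$ concentrate around $\tfrac{r}{2}\bigl(1-(1-2(\omega+\mu))^{t}\bigr)$ as in the Bernoulli computation underlying that lemma.
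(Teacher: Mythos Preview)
Your proposal is correct and follows the same Prange-based approach as the paper: decompose the cost into an expected $(1/2+\epsilon)^{-g}$ iterations times an $n^3$ Gaussian elimination per iteration, with the overlay step $O(n)$. The only differences are that the paper obtains the per-iteration success probability $(1/2+\epsilon)^g$ directly from the Bernoulli law of $\vec{e}$ (each of the $g$ selected coordinates is independently zero with probability $1/2+\epsilon$) rather than via your worst-case combinatorial bound $\binom{n-w}{g}/\binom{n}{g}$, and that the paper's short proof does not engage at all with the uniqueness or concentration issues you raise---so your treatment is strictly more careful than the original (one minor slip: the choice of $\mu$ is fixed in the Step~II discussion, not in \Cref{lem:t_choice}, which constrains $t$).
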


\begin{proof}
For an SD problem $\mat{H}\vec{e}=\vec{v}$, we first randomly shuffle the columns of $\mat{H}$. 
We then calculate the probability that this shuffle concentrates all nonzero elements of $\vec{e}$ in the first $n-g$ positions, leaving the last $g$ positions as zeros (i.e., $\wt{\vec{e}_{n-g+1:n}}=0$). 
Since each element of $\vec{e}$ equals $1$ with probability $\frac{1}{2}-\epsilon$, this probability is
$$\pr{P}_{\mathsf{nz}}=\left(\frac{1}{2}+\epsilon\right)^g.$$
If this condition holds, the SD problem can be solved via Gaussian elimination with $n^3$ operations. 
Thus, the total time complexity for the Prange algorithm is $\pr{T}_{\mathsf{Prange}}=\pr{P}_{\mathsf{nz}}^{-1} n^3$. 
Since the remaining steps of adding $\vec{e}'$ require only $O(n) \ll \pr{T}_{\mathsf{Prange}}$ time, the total complexity is dominated by noise recovery
$$\pr{T}_{\ISD} = \left(\frac{1}{2}+\epsilon\right)^{-g} n^{3}.$$
\end{proof}

\mypara{Remark} In \gim, we note that the weight of added noise $\eta$ is much lower than the theoretical bound $\frac{1}{2}-\epsilon$. 
Let $\pr{T}'_{\ISD}$ denote the cost of Attack-III with the concrete noise rate $\eta$ instead of $\frac{1}{2} - \epsilon$.
Consequently, we can estimate $\pr{T}'_{\ISD}$ as $$\pr{T}'_{\ISD}=(1-\eta)^{-g}\cdot n^3.$$

\section{Concrete Time Complexity Analysis}
\label{sect/concrete_complexity}
Building on our theoretical analysis, we now analyze the complexity of our attacks under concrete parameter choices of watermarking schemes.
Although the attacks initially operate on watermarked images or texts, we first transform them into the corresponding codewords via~\Cref{alg:TextInv} and~\Cref{alg:GIMInv}.
As the transformation cost is negligible compared to the attacks themselves, we focus solely on the complexity of our three attacks.

\subsection{Parameter Configurations}

We summarize the parameter configurations of \ldpc for \llm and \gim in~\Cref{tab:concrete_paras}, where the first two rows present the default settings provided by the original \llm and \gim.
However, we notice that some critical parameter settings remain unspecified, e.g., $n$ and $\lambda$ in \llm.
Therefore, we determine two parameter configurations, Config~\ref{config_llm} and Config~\ref{config_gim}, for \llm and \gim, respectively.
These specific parameters facilitate a concrete time complexity analysis of the proposed attacks.


\subsubsection{Parameter Configurations for \llm}
\label{subsec:para_setup_llm}



\begin{table}[t]
\centering
\caption{ Parameters of \ldpc for \llm and \gim.
}
\setlength{\tabcolsep}{5pt}
\begin{tabular}{c|c|c|c|c|c}
\toprule
  &  $n$  & $r$ & $g$ & $\lambda$ & $t$  \\  
\midrule 
\llm~\cite{christ2024pseudorandom} & - & $0.99n$ & $c\log^2n$ & $-$ & $\log n$ \\
\midrule
\gim~\cite{DBLP:conf/iclr/GunnZS25} & $2^{14}$ & $n-k-\lambda$ & $\log \binom{n}{t}$ & $g$ & $3 \le t \le 7$ \\ \midrule  
\makecell[c]{\textbf{Config 1} \\ (for \llm)} & $2^{17}$ & $0.99n$ & $\log \binom{n}{t}$ & $g$ & $3 \le t \le 14$\\
\midrule
\makecell[c]{\textbf{Config 2} \\ (for \gim)} & $2^{14}$ & $n-k-\lambda$ & $\log \binom{n}{t}$ & $g$ & $3 \le t \le 7$ \\
\bottomrule 
\end{tabular}
\label{tab:concrete_paras}
\end{table}

\begin{configuration}
\label{config_llm}
For \llm, we set $n \le 2^{17}$, $r\le 0.99n$, $g=\lambda=\log \binom{n}{t}$, and $3 \le t \le 14$.
\end{configuration}
\noindent When analyzing the computational complexity of attacks against \llm, we use the maximum parameter values in Config~\ref{config_llm} to provide the upper bound of security: $n=2^{17},r=\lfloor 0.99n \rfloor$, and $g=\lambda=\log \binom{n}{t}$ for each $t$.
Next, we explain the rationales for the parameter settings in Config~\ref{config_llm} as follows:
\begin{itemize}
\item \textbf{The code length $n$}: According to \Cref{construction_llm}, each token is mapped into binary bit strings.
We assume the length of binary bit strings is smaller than $32$ bits, i.e., the size of the token alphabet\footnote{The size of the alphabet varies from thousands to millions; thus, $2^{32}$ is a very loose upper bound.} is smaller than $2^{32}$.
Meanwhile, we assume that the max output length is $2^{12}=4,096$ tokens, which is a common setting for LLMs (e.g.,  gpt-3.5-turbo~\cite{gpt35url} and gpt-4-turbo~\cite{gpt4url}).\footnote{We also notice that the recently released, more advanced LLMs are capable of supporting $32,768$ max output tokens, e.g., gpt-4.1-turbo~\cite{gpt41url}.
We will prove that \llm can still not reach an $128$-bit security even with such length.
More details are available in~\Cref{subsec:defense_llm}.}
Note that in practical scenarios, the output length of LLMs typically does not reach the maximum limit. 
If the defender requires stronger robustness against the binary deletion channel (BDC) as described in \cite{christ2024pseudorandom}, the \ldpc must be embedded into a longer code. Consequently, the length of \ldpc becomes significantly shorter than the output length of LLMs.
As a result, it is natural and reasonable to assume that the code length $n$ is not larger than $32 \times 2^{12}=2^{17}$.


\item \textbf{The column size} $g$ \textbf{and security parameter} $\lambda$: Though  \cite{christ2024pseudorandom} sets $g=c\log^2 n$ where $c$ is a constant, but no concrete value of $c$ and expression of $\lambda$ is given.
Hence, we follow \gim~\cite{DBLP:conf/iclr/GunnZS25} to set $g=\lambda=\log \binom{n}{t}$.

\item \textbf{The Row-wise nonzero count} $t$: 
We set $3 \le t \le 14$ for \llm due to the following reasons.
(i) The lower bound of $t$: When $t$ is $1$ or $2$, the complexity of  Attack-I is linear in $n$.
Therefore, to exclude this trivial attack, we restrict our analysis to $t \ge 3$.
(ii) The upper bound of $t$: We use~\Cref{lem:t_choice} to calculate the upper bound of $t$.
Let $\rho = 1/2-\epsilon$ be the maximum error-correcting capability.
Since no explicit value of $\rho$ is discussed in \cite{christ2024pseudorandom}, we use the experimental results in \gim as reference.
During the experiments in~\Cref{subsec:realworld_gim}, we find that \gim has to correct vectors with noise rate $\rho=0.074$.
Hence, we bound $\rho \ge 0.074$, i.e., $\epsilon \le 0.426$.
However, when we try to calculate the upper bound of $t$ with $r$ and $\epsilon$ following~\cite{christ2024pseudorandom}, we notice a mistake in their calculation, i.e.,
$$(2\epsilon)^t / 2 > r^{-1/4} \nLeftrightarrow t<1+\log(r)/4\log(1/2\epsilon).$$
Therefore, we give a corrected version in~\Cref{lem:t_choice} and get $t \le 14$.



\end{itemize}




\begin{lemma} \label{lem:t_choice}
In order to decode successfully with high probability, we need $(2\epsilon)^t / 2 > r^{-1/4}$, i.e., $$t < \frac{\frac{1}{4}\log r-1}{\log ({1}/{2\epsilon})}.$$
\begin{proof}
With $(2\epsilon)^t / 2 > r^{-1/4}$,
we have $ (2\epsilon)^t > 2r^{-1/4} $.
Since $0 < \epsilon < 1/2$, we can simplify the inequality to
$ t < \log_{2\epsilon}(2r^{-1/4})$, i.e.,
$ t < \frac{1-1/4\log{r}}{\log (2\epsilon)} = \frac{\frac{1}{4}\log{r}-1}{\log{(1/2\epsilon)}}$.
\end{proof}
\end{lemma}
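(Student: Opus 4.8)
The plan is to treat the two parts of the statement separately: first recall why correct decoding forces the inequality $(2\epsilon)^t/2 > r^{-1/4}$, and then carry out the elementary algebra that turns this into the displayed upper bound on $t$, paying attention to the sign of $\log(2\epsilon)$ — which is precisely the place where the calculation of \cite{christ2024pseudorandom} goes wrong.

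For the first part I would recall the decision rule of \Cref{def:pub_prc}: $\Decode$ accepts iff $\wt{\mat{P}(\vec{x}+\vec{z})}\le(1/2-r^{-1/4})\,r$. On an honest codeword $\vec{x}+\vec{z}=\mat{G}\vec{s}+\vec{e}$ with $\vec{e}\leftarrow\Ber(n,\omega)$ and $\omega=1/2-\epsilon$, the relation $\mat{P}\mat{G}=\mat{0}$ collapses each coordinate of $\mat{P}(\vec{x}+\vec{z})$ to an inner product $\langle\vec{v},\vec{e}\rangle$ with $\vec{v}$ a weight-$t$ row of $\mat{P}$, i.e.\ an XOR of $t$ independent $\Ber(\omega)$ bits. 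The piling-up lemma gives $\Pr[\langle\vec{v},\vec{e}\rangle=1]=\tfrac12-\tfrac{(2\epsilon)^t}{2}$, so $\E[\wt{\mat{P}(\vec{x}+\vec{z})}]=(\tfrac12-\tfrac{(2\epsilon)^t}{2})\,r$. If $(2\epsilon)^t/2\le r^{-1/4}$ this mean already meets or exceeds the acceptance threshold, and a concentration argument (as used in \cite{christ2024pseudorandom}) then places $\wt{\mat{P}(\vec{x}+\vec{z})}$ above the threshold except with negligible probability, so $\Decode$ would reject w.h.p.; contrapositively, decoding succeeds w.h.p.\ only if $(2\epsilon)^t/2>r^{-1/4}$. (One may alternatively just take this inequality as given from \cite{christ2024pseudorandom}, since only its rewriting is in dispute.)

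For the second part, from $(2\epsilon)^t/2>r^{-1/4}$ I would multiply by $2$ and take $\log_2$, getting $t\log_2(2\epsilon)>1-\tfrac14\log_2 r$. The single delicate step is that $0<\epsilon<1/2$ forces $2\epsilon<1$, hence $\log_2(2\epsilon)<0$, so dividing through by $\log_2(2\epsilon)$ \emph{reverses} the inequality:
\[
t<\frac{1-\tfrac14\log r}{\log(2\epsilon)}=\frac{\tfrac14\log r-1}{\log(1/2\epsilon)},
\]
the last equality coming from multiplying numerator and denominator by $-1$ and using $-\log(2\epsilon)=\log(1/2\epsilon)>0$. This is exactly the asserted bound. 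I do not expect a genuine obstacle here — the lemma is an algebraic correction of a sign slip in prior work — and the only thing worth flagging explicitly is that the whole argument hinges on $\log(2\epsilon)<0$, equivalently on the noise parameter satisfying $\omega<1/2$ (which is built into the channel of \Cref{def:prc}); this is also what makes the right-hand side a meaningful upper bound, finite and positive once $\tfrac14\log r>1$.
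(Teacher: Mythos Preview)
Your proposal is correct and its second part is essentially identical to the paper's proof: both multiply by $2$, take logarithms, and use $0<\epsilon<1/2\Rightarrow\log(2\epsilon)<0$ to flip the inequality, arriving at the same final expression. Your first part (the piling-up argument explaining why $(2\epsilon)^t/2>r^{-1/4}$ is required) is additional justification that the paper omits, simply taking this inequality as input from \cite{christ2024pseudorandom} and proving only the algebraic rewriting.
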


\subsubsection{Parameter Configurations for \gim}
\label{subsec:para_setup_gim}

\begin{configuration}
\label{config_gim}
For \gim, we set $n=2^{14}$, $g=\lambda=\log \binom{n}{t}$, $k=\lambda + {message\_bit} + {parity\_bit}$
where $message\_bit=512$, and $parity\_bit= \log_2({\rm FPR})$ where ${\rm FPR}=10^{-5}$.
When calculating $\vec{y}=\mat{G}\vec{s}+\vec{e}$, the added noise vector $\vec{e}$ has noise weight $\eta=1-2^{{-\lambda}/{g^2}}$.
\end{configuration}

For Config~\ref{config_gim}, we employ the default parameters from~\cite{DBLP:conf/iclr/GunnZS25}.
Note that the weight $t$ is set to $3$ in most cases discussed in~\cite{DBLP:conf/iclr/GunnZS25}, and it is also suggested to set $t=\log(n)/2=7$ to ensure cryptographic undetectability.
For a complete analysis of our attacks, we vary the weight $t$ from $3$ to $7$ and change $r,g,k,\lambda$, and $\eta$ accordingly.





\subsection{Overall Performance}
\label{subsec:overall_performance}

First of all, we present a comprehensive comparison between the complexity of our attacks and the claimed security levels of \ldpc, and illustrate the trend of complexity with varying parameters in~\Cref{fig:all_complexities}.
The detailed complexities are available in Table~\ref{tab:complexity_llms} and Table~\ref{tab:complexity_gims} (Appendix~\ref{supp/attack_complexity}).

\mypara{Performance on \llm}
First, as shown in \Cref{fig:llm_complexities}, the best complexities of the three attacks are lower than the claimed security level $\lambda$ across all parameters. 
Second, it presents clear trends in the complexities.
For Attack-I and Attack-II, their respective attack complexities (i.e., $\pr{T}_{\mypartial}$, and $\pr{T}_{\dis}$) exhibit a consistent increasing trend with growing $t$.
Nevertheless, it consistently remains below $128$ bits, demonstrating that our attacks remain feasibly threatening.
Notably, as shown in Table~\ref{tab:complexity_llms}, for Attack-II, the frequency of weak keys is non-negligible and approaches $100\%$ for parameters $t=3$ and $t=4$, making the attack always effective in these cases.
For Attack-III, we can observe that even as $t$ increases, the value of $\pr{T}_{\ISD}$ remains nearly unchanged and consistently stays around $73$ bits, indicating that \llm exhibits vulnerability to our noise overlay attack across a reasonable range of parameters.


\begin{figure}[t]
\centering
\begin{subfigure}[b]{0.23\textwidth}
\includegraphics[width=\linewidth]{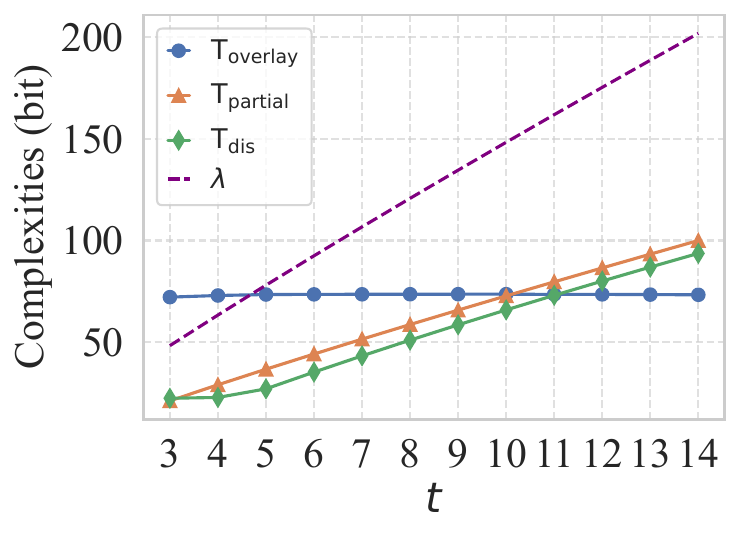} 
\caption{LLM.}
\label{fig:llm_complexities}
\end{subfigure}
\hfill 
\begin{subfigure}[b]{0.23\textwidth}
\includegraphics[width=\linewidth]{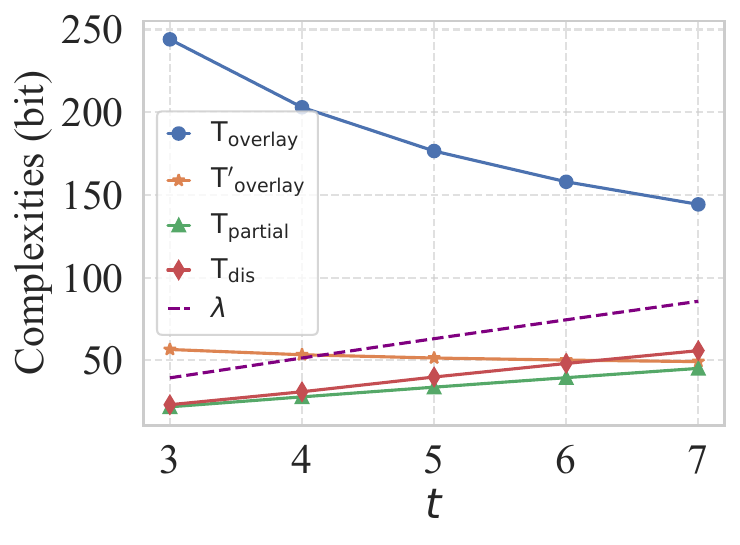} 
\caption{GIM.}
\label{fig:img_complexities}
\end{subfigure}
\caption{Time complexities of all attacks. The complexities are expressed in logarithm form.}
\label{fig:all_complexities}
\end{figure}

\mypara{Performance on \gim}
Similarly, \Cref{fig:img_complexities} shows that the best complexities of our three attacks remain below the claimed security level~$\lambda$ for all parameter settings. 
The first two attacks exhibit increasing complexity with growing $t$, which is similar to the performance on \llm.
Although $\pr{P}_{\weak}^{\rm GIM}$ of GIMs is larger than that of LLMs, it remains high under concrete parameters.
In Table~\ref{tab:complexity_gims}, when $t=3$, weak keys occur with nearly $100\%$, indicating practical vulnerability.
For Attack-III, we can observe that both the complexity $\pr{T}_{\ISD}$ and $\pr{T}_{\ISD}'$ decrease as $t$ increases, which comes from the decrease in the theoretical noise rate $\rho$ and the concrete noise rate $\eta$.
Due to the gap between $\rho$ and $\eta$ , there is a huge difference between $\pr{T}_{\ISD}$ and $\pr{T}_{\ISD}'$.
Taking into account $\pr{T}_{\ISD}'$ instead of $\pr{T}_{\ISD}$, all complexities are below $64$ bits, indicating practical threats under the parameter choices of \gim.

\subsection{Further Analysis of Attack-III}
\label{subsec:concrete_complexity_attack1}

\mypara{Performance on \llm}
The experimental results show that $\pr{T}_{\ISD}$ increases slightly as $t$ increases when $3 \le t \le 10$, but decreases when $ t \ge 12$.
This phenomenon indicates that a larger $t$ does not always lead to a higher security level.
We also estimate all reasonable parameters on $t>14$, which concretely verifies this conclusion.
We further analyze this phenomenon on the basis of the following truth.
\begin{itemize}
\item $\pr{T}_{\ISD} \propto (1-\rho)^{-g }=(\frac{1}{2}+\epsilon)^{-g}$.
\item $g$ increases as $t$ increases since $g=\log \binom{n}{t}$.
\item $\epsilon$ increases as $t$ increases because of the constraint in~\Cref{lem:t_choice}.
\end{itemize}
Therefore, for small values of $t$, the term $g$ dominates the behavior of $\pr{T}_{\ISD}$, causing $\pr{T}_{\ISD}$ to increase as $t$ increases.
As $t$ progressively increases, $\epsilon$ gradually becomes dominant, leading to a decrease in $\pr{T}_{\ISD}$.
Second, although the estimated $\pr{T}_{\ISD}$ seems large for small $t$ compared to the other attacks, the defender might add a much lower noise rate as \llm does, thereby causing a much lower $\pr{T}_{\ISD}$. 


\mypara{Performance on \gim}
$\pr{T}_{\ISD}$ decreases sharply when $3 \le t \le 7$ since the noise weight dominates the complexity.
However, the actual noise weight $\eta \ll \rho$ in all cases, making $\pr{T}_{\ISD}'$ much lower than the theoretical bound $\pr{T}_{\ISD}$.
The gap could be as large as $187$ bits, indicating the practical threat of our attack.

\section{Real-World Evaluations}
\label{sect/real_world_attack}
To further demonstrate the practical threat posed by our attacks, we launch Attack-I and Attack-II against real-world large generative models.
For Attack-III, the text quality of the watermarked \llm is low, making it difficult to analyze the effect of our modification. 
Therefore, we conduct Attack-III only on \gim.
Moreover, to verify that our attacks can generalize beyond specific experimental settings, we also conduct supplementary experiments on different models in Appendix~\ref{supp/diff_watermarks}.

\subsection{Setups}
\label{subsec:setups}
\mypara{Target LLM}
We employ DeepSeek-R1-Distill-Qwen-7B~\cite{deepseekai2025deepseekr1incentivizingreasoningcapability} as our real-world target LLM.
Each response is generated with a fixed length of $1,024$ tokens, chosen from a vocabulary of size $152{,}064$.
To embed watermarks, we apply \llm~\cite{christ2024pseudorandom} as described in~\Cref{sec:llm-scheme}.
Following Config~\ref{config_llm}, we set $n=\lceil \log_2 (152,064) \rceil * 1,024=18,432$, $r=\lfloor 0.95n \rfloor =17,510$, and $t$ as $3$ or $4$.


\mypara{Target GIM}
We adopt stable-diffusion-2-1-base~\cite{Rombach_2022_CVPR} as our target GIM.
To generate images, we use the Multistep DPMSolver Scheduler~\cite{10.5555/3600270.3600688} as the de-noising scheduler $f(\cdot)$ of \Cref{alg:GIM} and set the de-noising step $s$ to $50$.
We apply the \gim~scheme to embed watermarks with $t=$ $3$ or $4$, and all other parameters are set as in Config~\ref{config_gim}.
During watermark verification, we perform DDIM Inversion~\cite{song2020denoising} as the noise inversion scheduler $z(\cdot)$ of~\Cref{alg:GIMInv} with the same number of steps and guidance scale as the image generation process.

\mypara{AI-Generated Contents}
In each set of parameters, we randomly generate $128$ pairs of $\mathsf{pk}=(\mat{G},\vec{z})$. 
For each $\mathsf{pk}$, we generate $16$ texts or images. 
Meanwhile, we also generate non-watermarked content with the same number as a control group.
For GIMs, we follow the model hyperparameter settings in \cite{DBLP:conf/iclr/GunnZS25} and sample the prompts from the test split of the Gustavosta/Stable-Diffusion-Prompts dataset~\cite{diffusion-prompts}.

\mypara{Settings of Attack-I and Attack-II}
When attacking \llm, we set the temperature to $1.8$, and the parameters for attacking are set as follows: for $t=3$, we set $|\myset{L}'_1|=|\myset{L}'_2|=18,432$ and $\tau=0.60$ for both Attack-I and Attack-II.
When $t=4$, we set $|\myset{L}'_1|=|\myset{L}'_2|=5 \times 10^6$, and $\tau = 0.55$ for Attack-I and $\tau = 0.65$ for Attack-II.
The parameter settings for attacking \gim are as follows:
When $t=3$, we set $|\myset{L}'_1|=|\myset{L}'_2|=16,384$ and $\tau=0.75$ for both Attack-I and Attack-II.
When $t=4$, we set $|\myset{L}'_1|=|\myset{L}'_2|=5 \times 10^5$ and $\tau=0.75$ for both Attack-I and Attack-II.

\mypara{Evaluation Metrics}
(1) For Attack-I, we define the success rate as the probability that $|\myset{L}'| > 0$.
For Attack-II, the success rate corresponds to the probability of encountering weak keys.
(2) We also evaluate the TPR and FPR for watermark detection upon the successful attacks.
Let $\text{TPR}_0$ denote the TPR when testing the original codeword generated by the defender, i.e., $\vec{x}=\mat{G}\vec{s}+\vec{e}+\vec{z}$, and $\text{FPR}_0$ denote the FPR when testing a uniformly random vector.
Meanwhile, we use $\text{TPR}_1$ to denote the TPR for the extracted codeword from the watermarked content, which includes both the defender’s added noise and the noise introduced by the schemes.
Specifically, the additional noise of \llm is introduced in Line 5 and Line 7 of~\Cref{alg:TokenSample}, while the noise of \gim arises from the error of DDIM inversion.
Correspondingly, $\text{FPR}_1$ represents the FPR for the codeword corresponding to the unwatermarked content.



\subsection{Performance on LLMs}
\label{subsec:realworld_llm}


\begin{figure}[t]
\centering
\includegraphics[width=\linewidth]{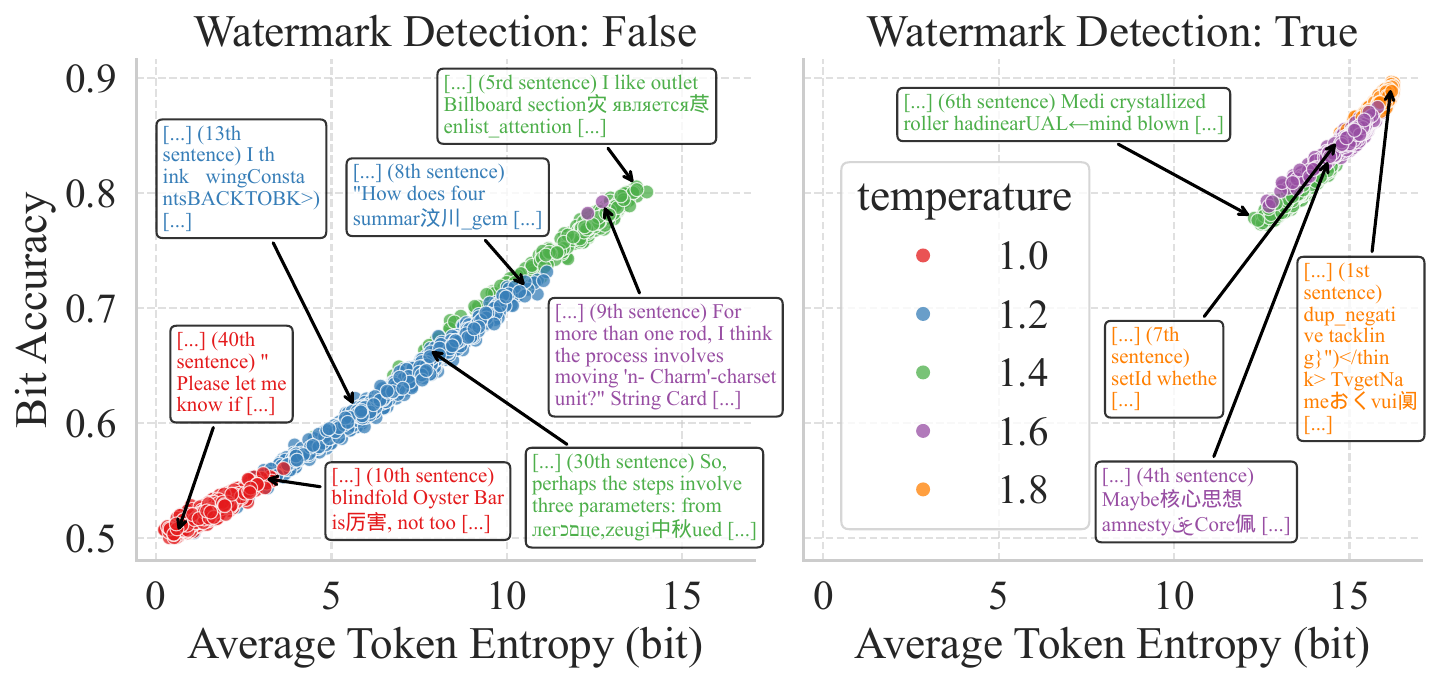}
\caption{The relationship between the readability of generated text and the detectability of watermarks.
We set $t=3$ for \llm.}
\label{fig:avg_entropy_vs_correct_rate}
\end{figure}

\subsubsection{Impracticality of \llm}

Prior to an in-depth examination of the attack effectiveness, it is imperative to first elucidate the impracticality of \llm.
Recall in Algorithm~\ref{alg:TokenSample}, the tokens are biased according to the probability $p'$ derived from the codewords, so the bit accuracy of the recovered message depends on the average token entropy of the LLM output.
To satisfy \ldpc’s correction bounds, the LLM must produce high-entropy token distributions, which correspond to low-confidence predictions and thus low-quality responses.

To examine this, we generate watermarked texts using temperature values $\mathsf{temp}$ from $1.0$ to $1.8$, controlling output entropy.
As shown in~\Cref{fig:avg_entropy_vs_correct_rate}, the watermark is undetectable at temperatures of $1.0$ and $1.2$.
At $\mathsf{temp}=1.4$, watermarks are detectable in 60\% of outputs, but readability is severely degraded, with frequent language mixing (green texts in~\Cref{fig:avg_entropy_vs_correct_rate}).
At $\mathsf{temp}=1.6$ and $1.8$, detection rates further increase, but the generated text becomes nearly unreadable (purple and orange texts in~\Cref{fig:avg_entropy_vs_correct_rate}).
Therefore, we launch attacks against texts capable of carrying watermarks, disregarding their readability.




\subsubsection{Performance of Attack-I}

As described in Table~\ref{tab:concrete_experiment_llm}, Attack-I maintains a perfect success rate of 100\%, which means we can stably construct nonempty $\myset{L}'$. 
More precisely, we can recover an average of 8.8 rows of the secret keys.
Meanwhile, the attainment of 100\% $\text{TPR}_0$ alongside 0\% $\text{FPR}_0$ confirms that Attack-I possesses a requisite precision to discriminate between PRC codewords and random vectors.
However, the $\text{TPR}_1$ on the real-world output texts becomes 61\% when $t=3$ and 98\% when $t=4$.
This degradation is attributable to the additional noise introduced during the token sampling procedure in~\Cref{alg:TextInv}. 
Specifically, the error rate increases from $0.10$ for the original codeword to $0.20$ for the recovered codeword.

\subsubsection{Performance of Attack-II}
On the other hand, Attack-II achieves complete success at $t=3$, yet drops to 63\% when $t=4$.
It demonstrates a considerable likelihood of weak keys occurring within LLM scenarios.
Furthermore, the high $\text{TPR}_0$ values and low $\text{FPR}_0$ values collectively demonstrate Attack-II's efficacy in differentiating between PRC codewords and random vectors, demonstrating that with up to $22.09$ duplicated rows on average, the distinguisher performs very well.
Unlike Attack-I, the $\text{TPR}_1$ sustains a perfect 100\% on real-world texts at $t=3$.
When $t=4$, the gap between $\text{TPR}_1$ and $\text{FPR}_1$ remains sufficiently large to enable reliable detection.



\begin{table}[t]
\centering
\caption{Watermark detection attacks against real-world LLM. 
}
\setlength{\tabcolsep}{5pt}
\begin{tabular}{c|c|c|c|c|c|c}
\toprule
Attack & $t$ & $\text{TPR}_0$ & $\text{FPR}_0$ & $\text{TPR}_1$ & $\text{FPR}_1$ & Success Rate  \\ 
\midrule
\multirow{2}{*}{I} & 3  & $100\%$ & $0\%$ & $61\%$ & $0\%$ & $100\%$ \\ 
 & 4  & $100\%$ & $0\%$ & $98\%$ & $0\%$  & $100\%$ \\ 
\midrule
\multirow{2}{*}{II} & 3 & $100\%$ & $0\%$ & $100\%$  &  $0\%$ &$100\%$ \\ 
& 4 & $98\%$ & $10\%$ & $59\%$ & $10\%$  & $63\%$ \\ 
\bottomrule
\end{tabular}
\label{tab:concrete_experiment_llm}
\end{table}

\subsection{Performance on GIMs}
\label{subsec:realworld_gim}

\subsubsection{Attack Scenario}

Given an image, both defenders verifying watermarks and adversaries launching detection attacks must utilize Algorithm~\ref{alg:GIMInv} to recover the codeword.
Algorithm~\ref{alg:GIMInv} employs two neural networks, \texttt{U-net} and \texttt{E-net}.
The defender typically employs the same \texttt{U-net} utilized in the generation process, along with an \texttt{E-net} compatible with the \texttt{D-net} that was used for generating the image, for a highly precise codeword extraction.
For the adversary, we first assume that they can obtain the same \texttt{U-net} and \texttt{E-net} as the defender.
However, considering that model parameters are typically kept confidential as commercial secrets, we also consider that the adversary will employ a proxy model that differs from the defender's.

\subsubsection{Performance of Attacks}

\mypara{Distinguishing Attacks}
The performance of distinguishing attacks against \gim is shown in Table~\ref{tab:concrete_experiment_gim}.
In our experiments, we find the error rate of the original codeword is $0.018$ and the noise rate of the received codeword is $0.074$.
With a lower-weight noise vector compared to \llm, we achieve more effective attacks.
When $t=3$, although only $5.8$ rows are recovered on average (vs. $8.8$ in \llm) in Attack-I, the $\text{TPR}_1$ is $93\%$ (vs. $61\%$ in \llm) with FPR remaining $0\%$.
For Attack-II, the weak key frequency is $100\%$ and duplicated rows average $14.86$.
Our distinguisher also performs well with $\text{TPR}_1=96\%$ and $\text{FPR}_1=0\%$.
The cases of $t=4$ are similar to $t=3$.
Meanwhile, to verify that our attacks also work without access to the target model's neural network in real-world scenarios, we conduct extra experiments with other proxy models, and the results show that our attacks still succeed.
Detailed results under different proxy models are discussed in Appendix~\ref{supp/diff_ddim_inv}.

\mypara{Removal Attacks} 
Then, we investigate leveraging Attack-III to remove the watermark embedded by \gim.
Given Attack-III, the attacker can recover $\vec{x}-\vec{e}$.
Subsequently, Projected Gradient Descent (PGD)~\cite{madry2018towards} is applied to search for an adversarial perturbation $\mathsf{Err}$ within $\delta$-bounded $l_\infty$ norm, such that the sign of the reversed codeword aligns with $-(\vec{x}-\vec{e})$.
Such an optimization objective is formalized as
$$\arg\min_{\mathsf{Err}} \; 
\left\| \tanh [k \cdot \texttt{ImageToCodeword}(\mathsf{Img} + \mathsf{Err}) ] + (\vec{x}-\vec{e}) \right\|_2,$$
where $k$ is the hyperparameter setting to $5$.
We conduct removal attacks on 100 images watermarked with \gim under $t=3$.
After 100 optimization steps with a learning rate of 0.01, the watermark removal success rate is $67\%$ under $\delta=8/255$, $89\%$ under $\delta=12/255$, and $96\%$ under $\delta=16/255$, with the visual examples shown in~\Cref{fig:removal}.
These results demonstrate that Attack-III can be effectively leveraged to remove watermarks against \gim.

\begin{table}[t]
\centering
\caption{Watermark detection attacks against real-world GIM. 
}
\setlength{\tabcolsep}{5pt}
\begin{tabular}{c|c|c|c|c|c|c}
\toprule
Attack & $t$ & $\text{TPR}_0$ & $\text{FPR}_0$ & $\text{TPR}_1$ & $\text{FPR}_1$ & Success Rate  \\ 
\midrule
\multirow{2}{*}{I} & 3  & $99\%$ & $0\%$ & $93\%$ & $0\%$ & $100\%$ \\ 
 & 4  & $93\%$ & $4\%$ & $66\%$ & $4\%$  & $78.9\%$ \\ 
\midrule
\multirow{2}{*}{II} & 3 & $100\%$ & $0\%$ & $96\%$  &  $0\%$ &$100\%$ \\ 
& 4 & $100\%$ & $0\%$ & $100\%$ & $0\%$  & $2.3\%$ \\ 
\bottomrule
\end{tabular}
\label{tab:concrete_experiment_gim}
\end{table}

\begin{figure}[t]
    \centering
    \includegraphics[width=\linewidth]{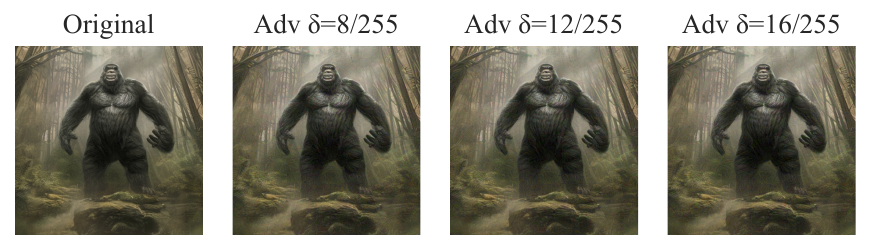}
    \caption{Example images with removed watermark.}
    \label{fig:removal}
\end{figure}

\section{Mitigations}
\label{sect/defense}

\vspace{2mm}

\subsection{Parameter Suggestions}
\label{subsec:defense_llm}
We show how Attack-I and Attack-III can be mitigated through appropriate parameter selection in Theorem~\ref{thm:mitigation_complexity}.

\begin{thm}[Complexities under Different Parameters]
\label{thm:mitigation_complexity}
Following the parameter relationships specified in Config~\ref{config_llm} for \llm and Config~\ref{config_gim} for \gim, while varying $n$ and $t$, we obtain the following results by substituting the corresponding parameters into Theorem~\ref{thm:partial_recovery} and Theorem~\ref{thm:error_recover}:
\begin{itemize}
    \item For $(t,n)\in \{(9,2^{32}), (11,2^{26}), (13,2^{23}), (15,2^{20})\}$, we have $\pr{T}_{\mypartial} \le 2^{128}$.
    
    \item When $n=2^{24}$, we have $\pr{T}_{\ISD} \le 2^{122.21}$ for all choices of $t$.
\end{itemize}
\end{thm}

For Attack-I, it is natural to increase $n$ and $t$ to achieve a higher security level.
In Theorem~\ref{thm:mitigation_complexity}, we provide the minimum code length $n$ targeting at a security level of $\lambda= 128$ bits under different choices of $t$. 
However, it is noteworthy that increasing $t$ also reduces the error-correcting capability of the \ldpc. 
Therefore, a careful trade-off between security and robustness is required to effectively mitigate Attack-I.

For {Attack-III}, the most effective defense is to increase $n$ rather than $t$. 
In Theorem~\ref{thm:mitigation_complexity}, a code length $n>2^{24}$ is necessary to achieve $\lambda \ge 128$.
However, current state-of-the-art LLMs, such as GPT-4.1-Turbo~\cite{gpt41url}, have a maximum token length of $2^{15} = 32{,}768$, which corresponds to $n = 2^{20}$. 
The situation is similar for \gim. 
As discussed in~\cite{Rombach_2022_CVPR}, the maximum initial latent dimension is $4 \times 128 \times 128$, corresponding to a code length of only $n = 2^{16}$, still a significant gap from the target of $128$-bit security.
Consequently, constructing \ldpc-based watermarking schemes for LLMs and GIMs that achieve 128-bit security remains a significant practical challenge.

If these parameter requirements are satisfied, the resulting security can be reduced to the hardness of the underlying LPN and planted XOR problems, thereby providing protection not only against our attacks but also against other potential attacks targeting these primitives.

\subsection{Revision of Key Generation Algorithm}
\label{subsec:defense3}

The key strategy for mitigating Attack-II is to modify the $\KeyGen()$ function to reduce the probability of generating weak keys. 
We present a revised $\KeyGen()$ algorithm in~\Cref{alg:defense_watermark_keygeneration}, with modifications highlighted in \textcolor{applegreen}{\textbf{green}}. 
The revised algorithm first samples the secret key randomly and then constructs the public key $\mat{G}$ via Gaussian elimination. 
This process preserves the required randomness assumptions for both $\mat{P}$ and $\mat{G}$ in the \ldpc framework, making the probability of generating weak keys negligible.

It should be noted that this mitigation introduces a computational trade-off, i.e., the key generation complexity increases from $O(n^2)$ in the original algorithm to $O(n^3)$, owing to the Gaussian elimination step in Line~7.

\begin{algorithm}[t]
\caption{\textcolor{applegreen}{\textbf{Revised}} Key Generation Algorithm}
\label{alg:defense_watermark_keygeneration}
\begin{algorithmic}[1]

\Require 
Parameters: $n$, $r=0.99n$, $g$, $t$
\Ensure Matrices $\mat{P}$ and $\mat{G}$ for watermark generation
\State Initialize an empty list for rows of $\mat{P}$

\ForAll{$i \in [0.99n]$}
    \State Sample a random \textbf{\textcolor{applegreen}{$t$-sparse}} vector $\vec{s}_i \in \mathbb{F}_2^{n}$
    \State Append $\vec{s}_i$ to the list of rows for $\mat{P}$
\EndFor

\State Let $\mat{P}$ be the matrix whose rows are $\vec{s}_1, \ldots, \vec{s}_{0.99n}$
\State \textbf{\textcolor{applegreen}{Perform Gaussian elimination on $\mat{P}$ and transform it to system form $[\mat{I}_r|| \mat{P}_1]$}}
\State \textbf{\textcolor{applegreen}{Let $\mat{G}_0 = [\mat{P}_1^T || \mat{I}_{n-r}]^T$}}
\State \textbf{\textcolor{applegreen}{Random select $g$ columns from $\mat{G}_0$ to construct $\mat{G} \in \GF{2}^{n \times g}$}}
\State Sample a random permutation ${\Pi} \in \GF{2}^{n \times n}$, and let $\mat{P} \leftarrow \mat{P}\Pi^{-1}, \mat{G} \leftarrow \Pi\mat{ G} $
\State \Return $(\mat{P}, \mat{G})$

\end{algorithmic}
\end{algorithm}


\subsection{Implementation Suggestions for GIMs}
\label{subsec:defense2}
As discussed in~\cite{christ2024pseudorandom}, the condition $t=\Omega(\log n)$ is necessary to guarantee the security of PRC-based constructions, as it ensures that the entropy of the secret key space remains sufficiently large to resist exhaustive search attacks.
However, \gim~\cite{DBLP:conf/iclr/GunnZS25} introduces additional parity bits and employs a belief propagation decoding algorithm to recover the embedded message, which in turn enforces the parameter $t$ to be a constant in order to achieve reliable decoding.
When $t$ is reduced from $\Omega(\log n)$ to a constant $c$, the row-wise entropy of the secret key space $\binom{r}{t}$ significantly degrades, shrinking to $\Omega(r^c)=\Omega(n^c)$.
As a consequence, an adversary can mount a brute-force attack by enumerating all possible rows of the secret key with polynomial-time complexity ${O}(n^c)$.
Therefore, we recommend following the design principle in~\cite{christ2024pseudorandom} when constructing the multi-bit PRC and avoiding explicit decoding procedures such as $\mathsf{Decode}'()$.

Moreover, as discussed in~\Cref{subsec:concrete_complexity_attack1}, \gim adds a noise vector with weight $\eta \ll \rho$, which results in an extremely efficient concrete attack compared to the theoretical complexity $\pr{T}_{\ISD}$.
In fact, since $\lambda = g = \Theta(\log^2n)$ and $k$ could be seen as the sum of $g$ and a constant $c$ in \gim, the complexity of Attack-III, $\pr{T}'_{\ISD} = 2^{k/g} n^3 =2^{1+c/\Theta(\log^2n)} n^3 = O(n^3)$, is only \textit{polynomial}.
Therefore, we suggest increasing the weight of the added noise vector.


\section{Conclusion \& Future Work}
\label{sect/conclusion}
This paper presents the first cryptanalysis of PRC by proposing three attacks, including two distinguishing attacks and one noise overlay attack. 
Through rigorous theoretical analysis of success rates and complexity, we demonstrate that the current instantiation \ldpc exhibits security vulnerabilities. 
Our attack complexity is lower than its claimed security guarantees, with most attacks requiring fewer than $128$ bits of complexity. 
Furthermore, we implement PRC-based watermarking schemes in real-world large generative models to validate the practical threat. 
We also propose three defenses, including parameter and implementation recommendations, along with a revised key generation algorithm. 
Beyond security analysis, we refine PRC's parameter ranges and prove that even without attacks, the current PRC scheme faces challenges in practical LLM applications. 
Our findings will facilitate the development of future AIGC watermarking schemes that achieve robustness, undetectability, and practicality.

\mypara{Future Work}
Several open problems remain for future work.
First, as discussed in~\Cref{subsec:realworld_llm}, \llm is not directly applicable to real-world scenarios. Since our work focuses on security analysis, we have not yet provided concrete measures to enable the practical deployment of PRC in large-scale LLMs.
Second, the theoretical asymptotic analysis of PRC fails to provide security under practical feasible parameters. 
It might be promising to construct \gim focusing only on robustness instead of undetectability. 
Finally, regarding parameters $(n,g,t,r)$, because $r$ is typically set slightly below $n$ with negligible impact on security or error correction, we focus on $n$ and $t$ in our analysis of practical schemes. While we show that increasing $g$ to $k$ in \gim yields no asymptotic advantage, the precise discussion of $g$ remains open for further investigation.

\section*{Acknowledgement}
We sincerely thank the anonymous reviewers for their constructive suggestions. 
This work was supported by the National Cyber Security-National Science and Technology Major Project (2026ZD1500900), the Scientific Research Innovation Capability Support Project for Young Faculty (ZYGXQNJSKYCXNLZCXM-P4), the Fundamental and Interdisciplinary Disciplines Breakthrough Plan of the Ministry of Education of China (JYB2025XDXM114), the National Natural Science Foundation of China (62402273), and Zhongguancun Laboratory.

\appendix
\section*{Ethical Considerations}
Our ethical considerations consist of \textit{stakeholder analysis}, \textit{impact analysis}, \textit{mitigations}, and \textit{justifications}.

\mypara{Stakeholder Analysis}
Our cryptanalysis of PRCs involves three primary stakeholder groups as follows.
(1) \textit{Model Providers}: They provide large models, including LLMs and GIMs, to users.
Upon receiving requests from users, these models generate watermarked text or images, where the watermark is only detectable by model providers holding the secret key.
(2) \textit{End Users}: They utilize large models to generate content that contains watermarks which embed user-related information, such as user identifiers and timestamps, which may raise privacy concerns.
(3) \textit{Researchers}: This group includes researchers from both the security and AI communities.
They study whether watermarking mechanisms affect the quality of AI-generated content and whether the embedded watermarks may leak sensitive user information.

\mypara{Impact Analysis}
Our analysis has both positive and negative impacts for the three stakeholder groups identified.

\textit{Positive Impacts}:
(1) \textit{Enhanced Deployment Security} ( for model providers):  
Our analysis clarifies the security boundaries of PRCs under concrete parameter choices, helping model providers deploy PRC-based watermarking more carefully and avoid insecure configurations.
(2) \textit{Guided Research Directions} (for researchers):  
Our findings may inspire future work on strengthening \ldpc or on designing new PRC schemes based on alternative hardness assumptions, thereby advancing the robustness and reliability of watermarking research.

\textit{Negative Impacts}:
(1) \textit{Potential Privacy Leakage} (for end users): 
The attacks described in this work could, in principle, be misused to detect PRC watermarks, potentially leading to privacy leakage.  
Currently, \ldpc has not been widely deployed in LLMs due to limited error-correcting capability, so the immediate risk remains low.  
For diffusion-based generative models, improper parameter choices might allow adversaries to recover the secret key, compromising intended security guarantees.
(2) \textit{Perceived Confidence Reduction} (for researchers):  
Researchers may temporarily perceive \ldpc as less secure after our analysis.

\mypara{Mitigations}
To mitigate these potential risks, providers of large generative models should carefully evaluate the security of \ldpc under concrete parameter settings and apply the mitigations outlined in this work before any deployment. 
Responsible use entails rigorous parameter selection, comprehensive security analysis, and strict compliance with relevant legal and regulatory requirements.

\mypara{Justifications}
We argue that the benefits of this research outweigh the potential risks. 
PRC has been proposed as a foundational watermarking primitive, and it is therefore essential to conduct an independent security analysis before any widespread deployment. 
Our study demonstrates that the security of \ldpc has been overestimated across all examined parameter regimes. 
In particular, embedding PRC watermarks into LLMs is impractical, and implementations on GIMs suffer from inappropriate parameter choices and incorrect key generation procedures. 
Our work highlights these issues to prevent incorrect deployments while preserving the security of the ecosystem.

\section*{Open Science}
We release all source code and part of the experimental data at \url{https://github.com/1234wangtr/PRC_estimator} and permanently archived at \url{https://zenodo.org/records/20321793}.
The accompanying \texttt{README.md} provides detailed instructions on estimating the complexity of our attacks under concrete parameters in~\Cref{sect/concrete_complexity}, constructing the watermark schemes, and reproducing the real-world attacks described in~\Cref{sect/real_world_attack}.


\bibliographystyle{plainurl}
\bibliography{ref2}

@INPROCEEDINGS {sok_zhao_sp25,
author = { Zhao, Xuandong and Gunn, Sam and Christ, Miranda and Fairoze, Jaiden and Fabrega, Andres and Carlini, Nicholas and Garg, Sanjam and Hong, Sanghyun and Nasr, Milad and Tramer, Florian and Jha, Somesh and Li, Lei and Wang, Yu-Xiang and Song, Dawn },
booktitle = { 2025 IEEE Symposium on Security and Privacy (SP) },
title = {{ SoK: Watermarking for AI-Generated Content }},
year = {2025},
volume = {},
ISSN = {2375-1207},
pages = {2621-2639}
}

@InProceedings{He_2016_CVPR,
author = {He, Kaiming and Zhang, Xiangyu and Ren, Shaoqing and Sun, Jian},
title = {Deep Residual Learning for Image Recognition},
booktitle = {Proceedings of the IEEE Conference on Computer Vision and Pattern Recognition (CVPR)},
month = {6},
year = {2016}
}

@article{DBLP:journals/tmlr/KuditipudiTHL24,
  author       = {Rohith Kuditipudi and
                  John Thickstun and
                  Tatsunori Hashimoto and
                  Percy Liang},
  title        = {Robust Distortion-free Watermarks for Language Models},
  journal      = {Trans. Mach. Learn. Res.},
  volume       = {2024},
  year         = {2024},
  timestamp    = {Thu, 08 Aug 2024 15:22:39 +0200},
  biburl       = {https://dblp.org/rec/journals/tmlr/KuditipudiTHL24.bib},
  bibsource    = {dblp computer science bibliography, https://dblp.org}
}

@article{deepseekai2025deepseekr1incentivizingreasoningcapability,
  title={Deepseek-r1: Incentivizing reasoning capability in llms via reinforcement learning},
  author={Guo, Daya and Yang, Dejian and Zhang, Haowei and Song, Junxiao and Zhang, Ruoyu and Xu, Runxin and Zhu, Qihao and Ma, Shirong and Wang, Peiyi and Bi, Xiao and others},
  journal={arXiv preprint arXiv:2501.12948},
  year={2025}
}

@inproceedings{DBLP:conf/colt/ChristGZ24,
  author       = {Miranda Christ and
                  Sam Gunn and
                  Or Zamir},
  editor       = {Shipra Agrawal and
                  Aaron Roth},
  title        = {Undetectable Watermarks for Language Models},
  booktitle    = {The Thirty Seventh Annual Conference on Learning Theory, June 30 -
                  July 3, 2023, Edmonton, Canada},
  series       = {Proceedings of Machine Learning Research},
  volume       = {247},
  pages        = {1125--1139},
  publisher    = {{PMLR}},
  year         = {2024},
  timestamp    = {Fri, 05 Jul 2024 15:44:56 +0200},
  biburl       = {https://dblp.org/rec/conf/colt/ChristGZ24.bib},
  bibsource    = {dblp computer science bibliography, https://dblp.org}
}

@article{sato2023embarrassinglysimpletextwatermarks,
  title={Embarrassingly simple text watermarks},
  author={Sato, Ryoma and Takezawa, Yuki and Bao, Han and Niwa, Kenta and Yamada, Makoto},
  journal={arXiv preprint arXiv:2310.08920},
  year={2023}
}

@inproceedings{christ2024pseudorandom,
  author    = {Christ, Miranda and Gunn, Sam},
  title     = {Pseudorandom Error-Correcting Codes},
  booktitle = {Advances in Cryptology -- CRYPTO 2024},
  editor    = {Reyzin, Leonid and Stebila, Douglas},
  series    = {Lecture Notes in Computer Science},
  volume    = {14925},
  pages     = {325--347},
  publisher = {Springer},
  address   = {Santa Barbara, CA, USA},
  year      = {2024}
}

@inproceedings{kirchenbauer2024watermarklargelanguagemodels,
  title={A watermark for large language models},
  author={Kirchenbauer, John and Geiping, Jonas and Wen, Yuxin and Katz, Jonathan and Miers, Ian and Goldstein, Tom},
  booktitle={International Conference on Machine Learning},
  pages={17061--17084},
  year={2023},
  organization={PMLR}
}

@article{10.62056/ahmpdkp10,
         author={Jaiden Fairoze and Sanjam Garg and Somesh Jha and Saeed Mahloujifar and Mohammad Mahmoody and Mingyuan Wang},
         title={Publicly-Detectable Watermarking for Language Models},
         volume={1},
         number={4},
         year={2025},
         date={2025-01-13},
         issn={3006-5496},
         journal={{IACR} Communications in Cryptology},
         publisher={International Association for Cryptologic Research}
}

@misc{openai2024gpt4ocard,
      title={GPT-4o System Card}, 
      author={OpenAI and Aaron Hurst and Adam Lerer and Adam P. Goucher and others},
      year={2024},
      eprint={2410.21276},
      archivePrefix={arXiv},
}

@inproceedings{DBLP:conf/iclr/GunnZS25,
  author       = {Sam Gunn and
                  Xuandong Zhao and
                  Dawn Song},
  title        = {An Undetectable Watermark for Generative Image Models},
  booktitle    = {The Thirteenth International Conference on Learning Representations,
                  {ICLR} 2025, Singapore, April 24-28, 2025},
  publisher    = {OpenReview.net},
  year         = {2025},
  timestamp    = {Thu, 15 May 2025 17:19:05 +0200},
  biburl       = {https://dblp.org/rec/conf/iclr/GunnZS25.bib},
  bibsource    = {dblp computer science bibliography, https://dblp.org}
}

@article{song2020denoising,
  title={Denoising diffusion implicit models},
  author={Song, Jiaming and Meng, Chenlin and Ermon, Stefano},
  journal={arXiv preprint arXiv:2010.02502},
  year={2020}
}

@inproceedings{deb2012combined,
  title={Combined DWT-DCT based digital image watermarking technique for copyright protection},
  author={Deb, Kaushik and Al-Seraj, Md Sajib and Hoque, Md Moshiul and Sarkar, Md Iqbal Hasan},
  booktitle={2012 7th International Conference on Electrical and Computer Engineering},
  pages={458--461},
  year={2012},
  organization={IEEE}
}

@inproceedings{zhu2018hidden,
  title={Hidden: Hiding data with deep networks},
  author={Zhu, Jiren and Kaplan, Russell and Johnson, Justin and Fei-Fei, Li},
  booktitle={Proceedings of the European conference on computer vision (ECCV)},
  pages={657--672},
  year={2018}
}

@inproceedings{tancik2020stegastamp,
  title={Stegastamp: Invisible hyperlinks in physical photographs},
  author={Tancik, Matthew and Mildenhall, Ben and Ng, Ren},
  booktitle={Proceedings of the IEEE/CVF conference on computer vision and pattern recognition},
  pages={2117--2126},
  year={2020}
}

@article{wen2023tree,
  title={Tree-rings watermarks: Invisible fingerprints for diffusion images},
  author={Wen, Yuxin and Kirchenbauer, John and Geiping, Jonas and Goldstein, Tom},
  journal={Advances in Neural Information Processing Systems},
  volume={36},
  pages={58047--58063},
  year={2023}
}

@inproceedings{yang2024gaussian,
  title={Gaussian shading: Provable performance-lossless image watermarking for diffusion models},
  author={Yang, Zijin and Zeng, Kai and Chen, Kejiang and Fang, Han and Zhang, Weiming and Yu, Nenghai},
  booktitle={Proceedings of the IEEE/CVF Conference on Computer Vision and Pattern Recognition},
  pages={12162--12171},
  year={2024}
}

@inproceedings{
kirchenbauer2024on,
title={On the Reliability of Watermarks for Large Language Models},
author={John Kirchenbauer and Jonas Geiping and Yuxin Wen and Manli Shu and Khalid Saifullah and Kezhi Kong and Kasun Fernando and Aniruddha Saha and Micah Goldblum and Tom Goldstein},
booktitle={The Twelfth International Conference on Learning Representations},
year={2024},
}

@inproceedings{
zhao2024provable,
title={Provable Robust Watermarking for {AI}-Generated Text},
author={Xuandong Zhao and Prabhanjan Vijendra Ananth and Lei Li and Yu-Xiang Wang},
booktitle={The Twelfth International Conference on Learning Representations},
year={2024}
}

@inproceedings{10.5555/3600270.3600688,
author = {Lu, Cheng and Zhou, Yuhao and Bao, Fan and Chen, Jianfei and Li, Chongxuan and Zhu, Jun},
title = {DPM-solver: a fast ODE solver for diffusion probabilistic model sampling in around 10 steps},
year = {2022},
isbn = {9781713871088},
publisher = {Curran Associates Inc.},
address = {Red Hook, NY, USA},
abstract = {Diffusion probabilistic models (DPMs) are emerging powerful generative models. Despite their high-quality generation performance, DPMs still suffer from their slow sampling as they generally need hundreds or thousands of sequential function evaluations (steps) of large neural networks to draw a sample. Sampling from DPMs can be viewed alternatively as solving the corresponding diffusion ordinary differential equations (ODEs). In this work, we propose an exact formulation of the solution of diffusion ODEs. The formulation analytically computes the linear part of the solution, rather than leaving all terms to black-box ODE solvers as adopted in previous works. By applying change-of-variable, the solution can be equivalently simplified to an exponentially weighted integral of the neural network. Based on our formulation, we propose DPM-Solver, a fast dedicated high-order solver for diffusion ODEs with the convergence order guarantee. DPM-Solver is suitable for both discrete-time and continuous-time DPMs without any further training. Experimental results show that DPM-Solver can generate high-quality samples in only 10 to 20 function evaluations on various datasets. We achieve 4.70 FID in 10 function evaluations and 2.87 FID in 20 function evaluations on the CIFAR10 dataset, and a 4 ~ 16\texttimes{} speedup compared with previous state-of-the-art training-free samplers on various datasets.},
booktitle = {Proceedings of the 36th International Conference on Neural Information Processing Systems},
articleno = {418},
numpages = {13},
location = {New Orleans, LA, USA},
series = {NIPS '22}
}

@misc{diffusion-prompts,
  author = {Gustavo Santana},
  title = {Gustavosta/Stable-Diffusion-Prompts},
  year = {2022},
  url ={https://huggingface.co/datasets/Gustavosta/Stable-Diffusion-Prompts}
}

@misc{Aar22,
  author = {Scott Aaronson},
  title = {My AI Safety Lecture for UT Effective Altruism},
  year = {2022},
  url = {https://scottaaronson.blog/?p=6823}
}

@article{ALBUSAIDI2024100630,
title = {Redefining boundaries in innovation and knowledge domains: Investigating the impact of generative artificial intelligence on copyright and intellectual property rights},
journal = {Journal of Innovation \& Knowledge},
volume = {9},
number = {4},
pages = {100630},
year = {2024},
issn = {2444-569X},
author = {Adil S. Al-Busaidi and Raghu Raman and Laurie Hughes and Mousa Ahmed Albashrawi and Tegwen Malik and Yogesh K. Dwivedi and Thuraiya {Al- Alawi} and Mohammed AlRizeiqi and Gareth Davies and Mark Fenwick and Parul Gupta and Shashikala Gurpur and Apeksha Hooda and Paulius Jurcys and Daryl Lim and Nicola Lucchi and Tanvi Misra and Ramakrishnan Raman and Anuragini Shirish and Paul Walton},
keywords = {ChatGPT, Generative artificial intelligence, GenAI, Generative scholar, Innovation, Intellectual property (IP) Risks, Large language models (LLMs), Misuse case analysis, Personality rights},
abstract = {The rapid integration of generative AI (GenAI) into industries and society has prompted a re-evaluation of copyright and intellectual property rights (IPR) frameworks. GenAI's ability to produce original content using data from human-created sources raises critical ethical and legal concerns. Current copyright and IPR frameworks, designed around human authorship, are insufficient to address these challenges. This study, using a multi-perspective approach, explores GenAI's disruptive potential in replicating or transforming copyrighted materials, challenging established IPR norms. Findings highlight gaps in legislation and the opacity of GenAI platforms. To address these issues, this study presents a Dynamic Ethical Framework linked to a future global fair use policy, aiming to guide responsible GenAI development and use. By incorporating insights from domain experts, this study contextualizes emerging challenges and potential solutions within broader societal and technological trends. That said, this study calls for international collaboration and further research to reform IPR related laws and frameworks, ensuring they remain relevant and equitable in a GenAI-driven era.}
}

@article{Lucchi_2024, title={ChatGPT: A Case Study on Copyright Challenges for Generative Artificial Intelligence Systems}, volume={15}, number={3}, journal={European Journal of Risk Regulation}, author={Lucchi, Nicola}, year={2024}, pages={602–624}}

@misc{aiad,
title={AI Ad Generator },
author={Predis.ai},
  url = {https://predis.ai/}
}

@misc{aimedia,
author={Manav Gupta},
title={AI in Social Media: Statistics, Trends Driving Content Creation},
  url = {https://colorwhistle.com/ai-in-social-media/}
}

@article{prange1962use,
  title={The use of information sets in decoding cyclic codes},
  author={Prange, Eugene},
  journal={IRE Transactions on Information Theory},
  volume={8},
  number={5},
  pages={5--9},
  year={1962},
  publisher={IEEE}
}

@inproceedings{DBLP:conf/nips/KrishnaSKWI23,
  author       = {Kalpesh Krishna and
                  Yixiao Song and
                  Marzena Karpinska and
                  John Wieting and
                  Mohit Iyyer},
  editor       = {Alice Oh and
                  Tristan Naumann and
                  Amir Globerson and
                  Kate Saenko and
                  Moritz Hardt and
                  Sergey Levine},
  title        = {Paraphrasing evades detectors of AI-generated text, but retrieval
                  is an effective defense},
  booktitle    = {Advances in Neural Information Processing Systems 36: Annual Conference
                  on Neural Information Processing Systems 2023, NeurIPS 2023, New Orleans,
                  LA, USA, December 10 - 16, 2023},
  year         = {2023},
  timestamp    = {Fri, 01 Mar 2024 16:26:20 +0100},
  biburl       = {https://dblp.org/rec/conf/nips/KrishnaSKWI23.bib},
  bibsource    = {dblp computer science bibliography, https://dblp.org}
}

@inproceedings{
madry2018towards,
title={Towards Deep Learning Models Resistant to Adversarial Attacks},
author={Aleksander Madry and Aleksandar Makelov and Ludwig Schmidt and Dimitris Tsipras and Adrian Vladu},
booktitle={International Conference on Learning Representations},
year={2018},
url={https://openreview.net/forum?id=rJzIBfZAb},
}

@article{tallam2025removingwatermarkspartialregeneration,
  title={Removing Watermarks with Partial Regeneration using Semantic Information},
  author={Tallam, Krti and Cava, John Kevin and Geniesse, Caleb and Erichson, N Benjamin and Mahoney, Michael W},
  journal={arXiv preprint arXiv:2505.08234},
  year={2025}
}

@inproceedings{DBLP:conf/icml/AnDRAXDZMWGH24,
  author       = {Bang An and
                  Mucong Ding and
                  Tahseen Rabbani and
                  Aakriti Agrawal and
                  Yuancheng Xu and
                  Chenghao Deng and
                  Sicheng Zhu and
                  Abdirisak Mohamed and
                  Yuxin Wen and
                  Tom Goldstein and
                  Furong Huang},
  title        = {{WAVES:} Benchmarking the Robustness of Image Watermarks},
  booktitle    = {Forty-first International Conference on Machine Learning, {ICML} 2024,
                  Vienna, Austria, July 21-27, 2024},
  publisher    = {OpenReview.net},
  year         = {2024},
  timestamp    = {Tue, 21 Jan 2025 16:53:28 +0100},
  biburl       = {https://dblp.org/rec/conf/icml/AnDRAXDZMWGH24.bib},
  bibsource    = {dblp computer science bibliography, https://dblp.org}
}

@inproceedings{becker2012decoding,
  title={Decoding Random Binary Linear Codes in {$2^{n/20}$}: How {$1+1=0$} Improves Information Set Decoding},
  author={Becker, Anja and Joux, Antoine and May, Alexander and Meurer, Alexander},
  booktitle={Advances in Cryptology--EUROCRYPT 2012: 31st Annual International Conference on the Theory and Applications of Cryptographic Techniques, Cambridge, UK, April 15-19, 2012. Proceedings 31},
  pages={520--536},
  year={2012},
  organization={Springer}
}

@misc{gpt4url,
  author        = "OpenAI",
  title         = "gpt-4-turbo",
url = {https://platform.openai.com/docs/models/gpt-4-turbo},
year          = "2023"

}

@misc{gpt35url,
  author        = "OpenAI",
  title         = "gpt-3.5-turbo",
url = {https://platform.openai.com/docs/models/gpt-3.5-turbo},
year          = "2023"
}

@misc{gpt41url,
  author        = "OpenAI",
  title         = "gpt-4.1-turbo",
url = {https://platform.openai.com/docs/models/gpt-4.1},
year          = "2025"
}

@InProceedings{Rombach_2022_CVPR,
    author    = {Rombach, Robin and Blattmann, Andreas and Lorenz, Dominik and Esser, Patrick and Ommer, Bj\"orn},
    title     = {High-Resolution Image Synthesis With Latent Diffusion Models},
    booktitle = {Proceedings of the IEEE/CVF Conference on Computer Vision and Pattern Recognition (CVPR)},
    month     = {6},
    year      = {2022},
    pages     = {10684-10695}
}

@inproceedings{jovanovic2024watermark,
  title={Watermark Stealing in Large Language Models},
  author={Jovanovi{\'c}, Nikola and Staab, Robin and Vechev, Martin},
  booktitle={ICML 2024},
  year={2024}
}

@article{pan2024markllm,
  title={Markllm: An open-source toolkit for llm watermarking},
  author={Pan, Leyi and Liu, Aiwei and He, Zhiwei and Gao, Zitian and Zhao, Xuandong and Lu, Yijian and Zhou, Binglin and Liu, Shuliang and Hu, Xuming and Wen, Lijie and others},
  journal={arXiv preprint arXiv:2405.10051},
  year={2024}
}

@article{liu2024survey,
  title={A survey of text watermarking in the era of large language models},
  author={Liu, Aiwei and Pan, Leyi and Lu, Yijian and Li, Jingjing and Hu, Xuming and Zhang, Xi and Wen, Lijie and King, Irwin and Xiong, Hui and Yu, Philip},
  journal={ACM Computing Surveys},
  volume={57},
  number={2},
  pages={1--36},
  year={2024},
  publisher={ACM New York, NY}
}

@inproceedings{fernandez2023stable,
  title={The stable signature: Rooting watermarks in latent diffusion models},
  author={Fernandez, Pierre and Couairon, Guillaume and J{\'e}gou, Herv{\'e} and Douze, Matthijs and Furon, Teddy},
  booktitle={Proceedings of the IEEE/CVF International Conference on Computer Vision},
  pages={22466--22477},
  year={2023}
}

@inproceedings{may2011decoding,
  title={Decoding Random Linear Codes in {$\tilde{\mathcal{O}}(2^{0.054n})$} },
  author={May, Alexander and Meurer, Alexander and Thomae, Enrico},
  booktitle={International Conference on the Theory and Application of Cryptology and Information Security},
  pages={107--124},
  year={2011},
  organization={Springer}
}

@inproceedings{dumer1991minimum,
  title={On minimum distance decoding of linear codes},
  author={Dumer, Ilya},
  booktitle={Proc. 5th Joint Soviet-Swedish Int. Workshop Inform. Theory},
  pages={50--52},
  year={1991},
  organization={Moscow}
}

@inproceedings{may2015computing,
  title={On computing nearest neighbors with applications to decoding of binary linear codes},
  author={May, Alexander and Ozerov, Ilya},
  booktitle={Annual International Conference on the Theory and Applications of Cryptographic Techniques},
  pages={203--228},
  year={2015},
  organization={Springer}
}

@article{berlekamp2003inherent,
  title={On the inherent intractability of certain coding problems (corresp.)},
  author={Berlekamp, Elwyn and McEliece, Robert and Van Tilborg, Henk},
  journal={IEEE Transactions on Information theory},
  volume={24},
  number={3},
  pages={384--386},
  year={2003},
  publisher={IEEE}
}

@article{ghentiyala2024new,
  title={New constructions of pseudorandom codes},
  author={Ghentiyala, Surendra and Guruswami, Venkatesan},
  journal={arXiv preprint arXiv:2409.07580},
  year={2024}
}

@inproceedings{alrabiah2024ideal,
  title={Ideal pseudorandom codes},
  author={Alrabiah, Omar and Ananth, Prabhanjan and Christ, Miranda and Dodis, Yevgeniy and Gunn, Sam},
  booktitle={Proceedings of the 57th Annual ACM Symposium on Theory of Computing},
  pages={1638--1647},
  year={2025}
}

@inproceedings{garg2025blackboxcryptouselesspseudorandom,
  title={Black-box crypto is useless for pseudorandom codes},
  author={Garg, Sanjam and Gunn, Sam and Wang, Mingyuan},
  booktitle={Theory of Cryptography Conference},
  pages={205--224},
  year={2025},
  organization={Springer}
}

@article{christ2025improved,
  title={Improved Pseudorandom Codes from Permuted Puzzles},
  author={Christ, Miranda and Golowich, Noah and Gunn, Sam and Moitra, Ankur and Wichs, Daniel},
  journal={arXiv preprint arXiv:2512.08918},
  year={2025}
}

@inproceedings{stern1988method,
  title={A method for finding codewords of small weight},
  author={Stern, Jacques},
  booktitle={International colloquium on coding theory and applications},
  pages={106--113},
  year={1988},
  organization={Springer}
}

@misc{qwen3technicalreport,
      title={Qwen3 Technical Report}, 
      author={Qwen Team},
      year={2025},
      eprint={2505.09388},
      archivePrefix={arXiv},
      primaryClass={cs.CL},
      url={https://arxiv.org/abs/2505.09388}, 
}

\appendix







\section{Additional Details for PRC, LLM, and GIM}
\label{supp/llm_gim_watermark}

\subsection{Mathematical Notations}
The key mathematical notations are summarized in \Cref{tab:notations}.

\begin{table}[t]
\centering
\caption{Key Mathematical Notations.}
\label{tab:notations}
\setlength{\tabcolsep}{2pt}
\renewcommand{\arraystretch}{1.2}
\begin{tabular}{c|p{6cm}}
\toprule
\textbf{Notation} & \textbf{Description} \\
\midrule
$\lambda$ & Security parameter \\
$n$ & Code length of PRC \\
$g$ & The column size of matrix $\mat{G}$ of PRC \\
$k$ & The column size of matrix $\mat{G}$ of \gim \\
$t$ & Row-wise nonzero count of $\mat{P}$ \\
$r$ & The row size of secret key $\mat{P}$ \\
$\mat{G}$ & $\mat{G}\in \mathbb{F}^{n\times g}_2$ is a public generator matrix \\
$\mat{P}$ & $\mat{P} \in \mathbb{F}^{r\times n}_2$ is a secret parity-check matrix \\
$\vec{z}$ & $\mat{z} \in \GF{2}^{n}$ is a random vector serving as the one-time pad \\
$\mathsf{pk}$ & $\mathsf{pk}=(\mat{G},\mat{z})$ is the public key of PRC\\
$\mathsf{sk}$ & $\mathsf{sk}=(\mat{P},\mat{z})$ is the secret key of PRC\\
$\wt{\vec{y}}$ & Hamming weight of a vector $\vec{y}$ \\
$\Ber(p)$ & Bernoulli distribution on $\{0,1\}$ with expectation $0 \le p \le 1$ \\ 
$\Ber(n,p)$ & $n$-bit strings where each bit is an i.i.d sample from $\Ber(p)$ \\
$\langle\vec{x},\vec{y}\rangle$ & Component-wise product of $\vec{x}$ and $\vec{y}$ \\
\bottomrule
\end{tabular}
\end{table}

\subsection{Detailed Algorithms for LLM and GIM}
The detailed algorithms for LLM and GIM generation are provided in Algorithm~\ref{alg:LLM} and Algorithm~\ref{alg:GIM}, respectively.

\begin{algorithm}[]
\caption{\texttt{LLM Generation}}
\label{alg:LLM}
\begin{algorithmic}[1]
\Require $\mathsf{prompt}\in \mathcal{T}^*$
\Param Temperature $\mathsf{temp}\in \mathbb R^+$, generation length $\mathsf{len}\in \mathbb Z^+$
\Ensure Generated text $T\in \mathcal{T}^*$
\State $T \leftarrow \perp$
\For{$i\leftarrow 0$ to $\mathsf{len}-1$}
\State ${\vec l}\leftarrow \mathsf{LLM}(\mathsf{prompt}\parallel T)$
\Comment The logits of tokens
\State ${\vec p}\leftarrow \mathsf{Softmax}(\vec l / \mathsf{temp})$ \Comment The probabilities of tokens 
\State $t\overset{\vec p}{\leftarrow}\mathcal{T}$ \Comment Sample the next token
\State $T\leftarrow T\parallel t$
\EndFor
\State \Return $T$
\end{algorithmic}
\end{algorithm}
\begin{algorithm}[t]
\caption{\texttt{GIM Generation}}
\label{alg:GIM}
\begin{algorithmic}[1]
\Require $\mathsf{prompt}\in \mathcal{T}^*$
\Param De-noising model \texttt{U-net}, Autoencoder model (\texttt{E-net},~\texttt{D-net}), 
\Statex  de-noising steps $s$, a (possibly randomized) de-noising scheduler $f(\cdot)$
\Ensure Generated image $\mathsf{Img}$
\State $\vec y^{s} \overset{\$}{\leftarrow}\mathcal{N}(\vec 0,{\bf I}_n)$
\For{$i \leftarrow s$ down to $1$}
\State $\vec y^{i-1} \leftarrow f(\texttt{U-net}(\mathsf{prompt},\vec y^{i}, i),i)$ 
\EndFor
\State $\mathsf{Img}\leftarrow \texttt{D-net}(\vec y^0)$
\State \Return $\mathsf{Img}$
\end{algorithmic}
\end{algorithm}

\subsection{Detailed Algorithms for \llm and \gim}
The specific algorithms involved in \llm and \gim are further described in Algorithm~\ref{alg:TokenSample}, Algorithm~\ref{alg:TextInv}, and Algorithm~\ref{alg:GIMInv}.
\begin{algorithm}[]
\caption{\texttt{Token Sampling}}
\label{alg:TokenSample}
\begin{algorithmic}[1]
\Require Token probabilities $\vec p$
\Statex codeword trunk $\vec x'$
\Param token space $\mathcal{T}$
\Ensure Sampled token $t\in \mathcal{T}$
\State $b \leftarrow \perp$
\For{$i\leftarrow 0$ to $\lceil \log_2|\mathcal{T}|\rceil-1$}
\State $p'\leftarrow \frac{\sum_{t\in \mathcal{T}\land \mathsf{Unpack}(t)=b\parallel1\parallel *}p_t}{\sum_{t\in \mathcal{T}\land \mathsf{Unpack}(t)=b\parallel *}p_t}$
\Statex \Comment{Marginal probability of current bit}
\If{$p'\le 1/2$}
\State $t'\leftarrow \Ber(2p'x_i')$
\Else
\State $t'\leftarrow \Ber(1-2(1-p')(1-x'_i))$
\EndIf
\State $b\leftarrow b\parallel t'$
\Comment{Sample token bit-wise}
\EndFor
\State $t\leftarrow \mathsf{Packbits}(b)$
\State \Return $t$
\end{algorithmic}
\end{algorithm}

\begin{algorithm}[]
\caption{\texttt{TextToCodeword}}
\label{alg:TextInv}
\begin{algorithmic}[1]
\Require Text $T\in\mathcal{T}^\mathsf{len}$
\Ensure Recovered codeword $\vec x'\in \mathbb{R}^n$
\State $\vec x' = \perp$
\For{$i\leftarrow 0$ to $\mathsf{len}-1$}
\State $\vec x'\leftarrow \vec x' \parallel \mathsf{Unpackbits}(T_i)$
\EndFor
\State \Return $\vec x'$
\end{algorithmic}
\end{algorithm}

\begin{algorithm}[]
\caption{\texttt{ImageToCodeword}}
\label{alg:GIMInv}
\begin{algorithmic}[1]
\Require Image $\mathsf{Img}$
\Param De-noising model \texttt{U-net}, Autoencoder model (\texttt{E-net},~\texttt{D-net}), 
\Statex inversion steps $s$, a (possibly randomized) noise inversion scheduler $z(\cdot)$, error calibration factor $\sigma$
\Ensure Recovered codeword $\vec x'\in \mathbb{R}^n$
\State $\vec y^{0}\leftarrow \texttt{E-net}(\mathsf{Img})$
\For{$i \leftarrow 0$ to $s-1$}
\State $\vec y^{i+1} \leftarrow z(\texttt{U-net}(\vec y^{i}, \perp,i+1),i+1)$ 
\EndFor
\State $\vec x'=\mathsf{erf}(\vec y^{s}/\sqrt{2\sigma^2(1+\sigma^2)})$
\State \Return $\vec x'$
\end{algorithmic}
\end{algorithm}

\section{Procedures of Attacks}
\label{supp/attack_algorithm}
This section describes the full procedures of the three attacks, which correspond to Attack-I, Attack-II, and Attack-III, respectively, as shown in Algorithm~\ref{alg:attack2_partial}, Algorithm~\ref{alg:attack3_weakey} and Algorithm~\ref{alg:attack4_erroroverlay}.

\begin{algorithm}[]
\caption{\texttt{Attack-I} }
\label{alg:attack2_partial}
\begin{algorithmic}[1]
\Require Public Key $(\mat{G},\vec{z})$, target codewords $\myset{X} = \{\vec{x}^{(1)},\cdots,\vec{x}^{(m)}\}$ 
\Param $(\alpha,\beta,r',l, \tau)$ defined in~\Cref{sec:partial_sk_recovery} and $(g,t,n,r)$ of \ldpc
\Ensure $\mathsf{true}$ or $\mathsf{false}$, where $\mathsf{true}$ indicates $\myset{X}$ are encoded by PRC and $\mathsf{false}$ otherwise
\State $\myset{L}_1 \leftarrow  \{ (a_1,...,a_{t_1},\vec{sum}_1) | 1\leq a_1 < ... < a_{t_1} \leq n, \vec{sum}_1=\sum_{i=1}^{t_1} \mat{G}_{a_i} \}$
\State $\myset{L}_2 \leftarrow \{ (a_{t_1+1},...,a_{t},\vec{sum}_2) | 1 \leq a_{t_1+1} < ... < a_{t} \leq n,  \vec{sum}_2=\sum_{i=t_1+1}^{t} \mat{G}_{a_i} \}$
\If{$t$ is \textbf{even}}
\State Construct $\myset{L}_1' \subset \myset{L}_1$ and $|\myset{L}_1'|=\frac{1}{\sqrt{r'/l}}|\myset{L}_1|$
\State Construct $\myset{L}_2' \subset \myset{L}_2$ and $|\myset{L}_2'|=\frac{1}{\sqrt{r'/l}}|\myset{L}_2|$
\Else{}
\State Construct $\myset{L}_1' \subset \myset{L}_1$ and $|\myset{L}_1'|={\alpha}|\myset{L}_1|$
\State Construct $\myset{L}_2' \subset \myset{L}_2$ and $|\myset{L}_2'|={\beta}|\myset{L}_2|$
\EndIf
\State Sort $\myset{L}'_1,\myset{L}'_2$
\State $\myset{L}' \leftarrow \text{MERGE-JOIN}(\myset{L}'_1,\myset{L}'_2)$
\State Initialize an empty list for vectors of $\myset{V}$
\ForAll{$(a_1,\cdots,a_t) \in \myset{L}'$}
\State $\vec{v} \leftarrow (v_1,\cdots,v_n)$ 
\Comment{where $v_i=1$ if $\exists j \text{ s.t. } i=a_j$ and $v_i=0$ otherwise}
\State Append $\vec{v}$ to the list of vectors for $\myset{V}$
\EndFor
\State Initialize $N_\mathsf{tot},N_\mathsf{zero} \leftarrow 0,0$
\ForAll{$\vec{v}^{(j)} \in \myset{V}$ and $\vec{x}^{(i)} \in \myset{X}$ }
\State $N_{\mathsf{tot}} \leftarrow N_{\mathsf{tot}} + 1$
\If{ $\langle \vec{v}^{(j)}, \vec{x}^{(i)}+\vec{z}  \rangle = 0 $  }
\State $N_{\mathsf{zero}} \leftarrow N_{\mathsf{zero}} + 1$
\EndIf
\EndFor
\If {$N_{\mathsf{zero}} / N_{\mathsf{tot}} \ge \tau $} 
\State \Return $\mathsf{true}$  
\Else {} 
\State \Return $\mathsf{false}$ 
\EndIf
\end{algorithmic}
\end{algorithm}

\begin{algorithm}[]
\caption{\texttt{Attack-II}}
\label{alg:attack3_weakey}
\begin{algorithmic}[1]
\Require Public Keys $\{(\mat{G}^{(i)},\vec{z}^{(i)})\}$, target codewords $\myset{X} = \{\vec{x}^{(1)},\cdots,\vec{x}^{(m)}\}$ outputted by a certain oracle 
\Param Threshold $\tau$ defined in Section~\ref{sec:partial_sk_recovery} and $(g,t,n,r)$ of \ldpc
\Ensure $\mathsf{true}$ or $\mathsf{false}$, where $\mathsf{true}$ indicates $\myset{X}$ are generated by \ldpc and $\mathsf{false}$ otherwise
\State Randomly select a public key pair $(\mat{G},\vec{z}) \xleftarrow{\$} \{(\mat{G}^{(i)},\vec{z}^{(i)})\} $
\State $\myset{I} \leftarrow \{ (\alpha_j,\beta_j) | \mat{G}_{\alpha_j} = \mat{G}_{\beta_j} \}$
\If {$|\myset{I}|=0$}
\State goto line 1
\EndIf
\State Obtain $\myset{X}$ from the oracle corresponding to $(\mat{G},\vec{z})$
\State Initialize $N_\mathsf{tot},N_\mathsf{dup} \leftarrow 0,0$
\ForAll{$(\alpha_j,\beta_j) \in \myset{I}$ and $\vec{x}^{(i)} \in \myset{X}$ }
\State $N_{\mathsf{tot}} \leftarrow N_{\mathsf{tot}} + 1$
\If{ $x^{(i)}_{\alpha_j}= x^{(i)}_{\beta_j} $  }
\State $N_{\mathsf{dup}} \leftarrow N_{\mathsf{dup}} + 1$
\EndIf
\EndFor
\If {$N_{\mathsf{dup}} / N_{\mathsf{tot}} \ge \tau $} 
\State \Return $\mathsf{true}$  
\Else {} 
\State \Return $\mathsf{false}$ 
\EndIf

\end{algorithmic}
\end{algorithm}

\begin{algorithm}[]
\caption{\texttt{Attack-III}}
\label{alg:attack4_erroroverlay}
\begin{algorithmic}[1]
\Require Public Key $(\mat{G},\vec{z})$, target codeword $\vec{x}=\mat{G}\vec{s}+\vec{e}+\vec{z}$
\Param noise rate $\mu$ defined in~\Cref{subsec:msg_recovery} and $(g,t,n,r)$ of \ldpc
\Ensure $\vec{x}''=\vec{x}+\vec{e}'$ s.t. $\Decode(1^\lambda,\mathsf{sk},\vec{x}'') = \perp$ and $w_H(\vec{e}')=\mu n$
\State $\vec{x}' \leftarrow \vec{x}+\vec{z}$
\State $\vec{e} \leftarrow \mathsf{ISD}(\mat{G},\vec{x}')$ 
\Comment{recover $\vec{e}$ from $\vec{x}',\mat{G}$ with an ISD algorithm}
\State $\myset{S} \leftarrow \{1,\cdots,n\} \backslash \mathsf{supp}(\vec{e})$ 
\Comment{where $\mathsf{supp}(\vec{e})$ denotes the support set of $\vec{e}$ }
\State Choose $\myset{T} \subset \myset{S}$ and $|\myset{T}|=\mu n$
\State Construct $\vec{e}'$ with support set $\myset{T}$
\State $\vec{x}'' \leftarrow \vec{x}+\vec{e}'$
\State \Return $\vec{x}''$
\end{algorithmic}
\end{algorithm}

\section{Complexities of Attacks}
\label{supp/attack_complexity}
The detailed complexities of our attacks are summarized in Table~\ref{tab:complexity_llms} for \llm and Table~\ref{tab:complexity_gims} for \gim.

\begin{table*}[t]
\centering
\caption{Time complexities of all attacks on \llm.}
\setlength{\tabcolsep}{13pt}
\begin{tabular}{c|c|c|c|c|c|c|c}
\toprule
$t$ & $\epsilon$ & $\rho$ & $\log_2^{\pr{T}_{\mypartial}}$ & $\log_2^{\pr{P}_{\weak}^{\rm{LLM}}}$ & $\log_2^{\pr{T}_{\dis}}$ & $\log_2^{\pr{T}_{\ISD}}$ & $\lambda$ \\  
\midrule \midrule
$3$ & $0.236$ & $0.264$ & $21.30$ & $0.00$ & $22.58$ & $72.21$ & $48$\\
$4$ & $0.285$ & $0.215$ & $29.19$ & $-0.00$ & $22.98$ & $73.02$ & $63$\\
$5$ & $0.319$ & $0.181$ & $36.84$ & $-3.91$ & $27.20$ & $73.50$ & $78$\\
$6$ & $0.344$ & $0.156$ & $44.28$ & $-11.89$ & $35.42$ & $73.57$ & $93$\\
$7$ & $0.363$ & $0.137$ & $51.59$ & $-19.66$ & $43.39$ & $73.61$ & $107$\\
$8$ & $0.377$ & $0.123$ & $58.76$ & $-27.20$ & $51.11$ & $73.64$ & $121$\\
$9$ & $0.389$ & $0.111$ & $65.84$ & $-34.55$ & $58.62$ & $73.66$ & $135$\\
$10$ & $0.399$ & $0.101$ & $72.82$ & $-41.73$ & $65.94$ & $73.67$ & $148$\\
$11$ & $0.408$ & $0.092$ & $79.71$ & $-48.75$ & $73.08$ & $73.54$ & $162$\\
$12$ & $0.415$ & $0.085$ & $86.54$ & $-55.64$ & $80.09$ & $73.56$ & $175$\\
$13$ & $0.421$ & $0.079$ & $93.29$ & $-62.40$ & $86.95$ & $73.46$ & $188$\\
$14$ & $0.426$ & $0.074$ & $99.99$ & $-69.04$ & $93.69$ & $73.37$ & $202$\\
\bottomrule
\end{tabular}
\label{tab:complexity_llms}
\end{table*}

\begin{table*}[]
\centering
\caption{Time complexities of all attacks on \gim.}
\setlength{\tabcolsep}{10pt}
\begin{tabular}{c|c|c|c|c|c|c|c|c|c}
\toprule
$t$ & $\epsilon$ & $\rho$ & $\eta$ & $\log_2^{\pr{T}_{\mypartial}}$ & $\log_2^{\pr{P}_{\weak}^{\rm{GIM}}}$ & $\log_2^{\pr{T}_{\dis}}$ & $\log_2^{\pr{T}_{\ISD}}$ & $\log_2^{\pr{T}_{\ISD}'}$ & $\lambda$ \\  
\midrule \midrule
$3$ & $0.281$ & $0.219$ & $0.018$ & $21.88$ & $-0.01$ & $23.16$ & $244.03$ & $56.56$ & $39$\\
$4$ & $0.325$ & $0.175$ & $0.013$ & $27.92$ & $-7.83$ & $31.01$ & $202.96$ & $53.37$ & $51$\\
$5$ & $0.354$ & $0.146$ & $0.011$ & $33.79$ & $-16.71$ & $39.92$ & $176.52$ & $51.40$ & $63$\\
$6$ & $0.375$ & $0.125$ & $0.009$ & $39.52$ & $-24.87$ & $48.11$ & $157.93$ & $50.15$ & $75$\\
$7$ & $0.391$ & $0.109$ & $0.008$ & $45.14$ & $-32.61$ & $55.87$ & $144.27$ & $49.22$ & $86$\\
\bottomrule
\end{tabular}
\label{tab:complexity_gims}
\end{table*}

\section{Attack Performance Under Proxy Models}
\label{supp/diff_ddim_inv}
When performing our attacks under proxy models, we generate watermarked images with stable-diffusion-2-1-base, while extracting the codewords through~\Cref{alg:GIMInv} using the autoencoders (\texttt{E-net},\texttt{D-net}) from stable-diffusion-2-1-base ($\mathsf{SD21}$), stable-diffusion-v1-5 ($\mathsf{SD15}$), and stable-diffusion-2-base ($\mathsf{SD20}$)~\cite{Rombach_2022_CVPR}.
Then, we calculate TPR values of $\text{TPR}_{\mathsf{SD21}},\text{TPR}_{\mathsf{SD15}}$, and $\text{TPR}_{\mathsf{SD20}}$. 
The parameter choice is the same as the formal set except that $\tau=0.60$ for both Attack-I and Attack-II.

The results are presented in Table~\ref{tab:diff_ddim_new}.
We display only one FPR column, as the FPR is $0\%$ for all vectors in the experiments.
Our attack with proxy neural networks remains effective in both Attack-I and Attack-II.
In particular, we observe that $\mathsf{SD20}$ performs almost as well as the original target model $\mathsf{SD21}$.
These experimental results further demonstrate the practical threats posed by our proposed watermark removal attacks.

\begin{table}[p]
\centering
\caption{Watermark detection attacks against real-world GIM with different proxy models when $t=3$.
The success rates of all attacks are $100\%$.
}
\label{tab:diff_ddim_new}

\setlength{\tabcolsep}{5pt}
\begin{tabular}{c|c|c|c|c|c}
\toprule
Attack & $\text{TPR}_0$ & $\text{TPR}_{\mathsf{SD21}}$ & $\text{TPR}_\mathsf{SD15}$ & $\text{TPR}_\mathsf{SD20}$ & FPR   \\ 
\midrule
 I  & $98\%$ & $95\%$ & $61\%$ & $95\%$ & $0\%$  \\ 
 II & $100\%$ & $100\%$ & $95\%$ & $100\%$ & $0\%$  \\ 
\bottomrule
\end{tabular}
\end{table}

\section{Attack Performance Under Different Models}
\label{supp/diff_watermarks}

We conduct extra experiments using different generative models under the same setup as in~\Cref{subsec:setups} with $t=3$.
We employ Qwen-3B~\cite{qwen3technicalreport} for \llm, where Attack-I achieves $\text{TPR}_0=99\%,\text{TPR}_1=33\%,\text{FPR}=1\%$ and Attack-II achieves $\text{TPR}_0=100\%,\text{TPR}_1=100\%,\text{FPR}=0\%$.
It is worth noting that \llm also exhibits impractical quality on Qwen-3B.
For \gim, we utilize stable-diffusion-v1-5~\cite{Rombach_2022_CVPR}, where $\text{TPR}_0=98\%,\text{TPR}_1=84\%,\text{FPR}=2\%$ and Attack-II achieves $\text{TPR}_0=100\%,\text{TPR}_1=97\%,\text{FPR}=0\%$.
In all cases, the success rate is $100\%$.
The gap between TPR and FPR demonstrates that our distinguishing attacks remain effective across diverse models.




\clearpage

\twocolumn[
\begin{center}
\section{Artifact Appendix}
\end{center}
]

\subsection{Abstract}
Pseudorandom error-correcting codes (PRCs), a novel cryptographic primitive recently proposed at CRYPTO 2024, are primarily applied in undetectable watermarking schemes for large generative models. However, the security of PRCs has not yet been systematically analyzed. To fill this gap, we present the first cryptanalysis of PRCs. Specifically, focusing on LDPC-PRC, the only known practical instantiation of PRCs, we propose three novel attacks that challenge its undetectability (Attack I, II) and robustness (Attack III).
To rigorously demonstrate the practical threat, this artifact analyzes the concrete attack complexity under realistic parameters and validates the attack effectiveness on both real-world large language models and generative image models, including DeepSeek and Stable Diffusion.
Our analysis shows that the claimed security guarantees of LDPC-PRC are undermined across all practically feasible regimes.

\subsection{Description \& Requirements}

\subsubsection{Security, privacy, and ethical concerns}
The artifact attacks the PRC watermark scheme deployed in self-hosted AIGC systems without involving any local data, real-world systems or human subjects.
Therefore, it does not raise security, privacy, or ethical concerns.

\subsubsection{How to access}
Our artifact is permanently and publicly available at \url{https://zenodo.org/records/21771706} with DOI:10.5281/zenodo.21771706.
Our artifact can also be downloaded from GitHub with:\\
{\small \texttt
{git clone https://github.com/1234wangtr/PRC\_estimator
}}

\subsubsection{Hardware dependencies}
Fully running this artifact requires a GPU machine with at least 80GB VRAM (Ours: NVIDIA H20 with 141GB of VRAM), equipped with a CPU with moderate computational power (Ours: Intel XEON PLATINUM 8558P 2.70 GHz), 16GB RAM (Ours: 2TB), and 60GB free disk space. However, the main claims can be verified on a CPU-only machine with 5 GB of free disk space, given pre-generated data.

\subsubsection{Software dependencies}
\begin{compactdesc}
  \item \textbf{OS:} Modern x86\_64 Linux with \texttt{git}, \texttt{bash}, and \texttt{unzip} installed (Ours: Ubuntu 24.04.1, Kernel 6.17.0-29-generic).
  \item \textbf{Python:} 3.11, with \texttt{pip} installed and supports \texttt{conda} virtual environments.  (Ours: 3.11.4 with Miniconda 25.3.1)
  \item \textbf{(Optional for GPU) NVIDIA Driver:} 525.60.13+ to support CUDA 12.x. (Ours: 595.71.05, CUDA 12.1.105)
  \item \textbf{(Optional) Fonts:} \textit{Times New Roman} and \textit{Droid Sans Fallback}. You can install it via \texttt{setup/get\_fonts.sh}.
\end{compactdesc}

\subsubsection{Benchmarks}
None. This artifact does not involve third-party benchmarks.

\subsection{Set-up}
\subsubsection{Installation}
\label{sec:install}
After downloading the artifact, you can enter \texttt{PRC\_estimator} folder, which will be the working directory for all the following commands.

Our artifact involves two separate Python virtual environments for LLM and GIM experiments. You can set them up with:
\texttt{
conda env create -f llm/environment.yml \&\&
conda env create -f gim/environment.yml}

\subsubsection{Basic Test}
You can run Experiment 1 (E1) to test if the environments are correctly set up.

\subsection{Evaluation workflow}
\subsubsection{Major Claims}
\begin{compactdesc}
  \item[(C1):] Our attacks reveal that PRC-based watermarking schemes failed to achieve 128-bit security under realistic parameters.
  This is proven by the experiment E1 described in Section~5.2 of our paper, whose results are presented in Figure~4 and Tables~6 and~7 of our paper.

  \item[(C2):] Our Attacks I and II can successfully distinguish real-world LDPC-PRC watermarks for generated text.
  This is proven by the experiment E2 described in Section~6.2.2 and Section~6.2.3 of our paper, whose results are presented in Table~3 of our paper. 

  \item[(C3):] Employing LDPC-PRC watermarks for text generation only works for large temperatures, which leads to a significant quality degradation.

  \item[(C4):] Our Attacks I and II can successfully distinguish real-world LDPC-PRC watermarks for generated images.
  This is proven by the experiment E4 described in Section~6.3.2 of our paper, whose results are presented in Table~4 of our paper.
  These attacks can success even when the adversary has access only to a proxy model.
  This is proven by the experiment E4 described in Appendix D of our paper, whose results are presented in Table~8 of our paper.

  \item[(C5):] The attacker can perform Attack III to remove the LDPC-PRC watermarks for images with a high success rate under a reasonable image quality budget.
  This is proven by the experiment E5 described in Section~6.3.2 of our paper, whose results are presented at the end of Section~6.3.2 of our paper.
\end{compactdesc}

\subsubsection{Experiments}

\begin{compactdesc}
  \item[(E1):] \textbf{Concrete Time Complexity Analysis} [1 human-minute + 1 CPU minute].
  Estimate the concrete time complexity of our attacks against the LDPC-PRC schemes with LLM and GIM watermarking parameters.
  
  \textbf{Execution:} Run the security estimation scripts as:
  \begin{verbatim}
conda run -n prc-estimator-llm \
  python llm/security_estim/main.py
conda run -n prc-estimator-gim \
  python gim/security_estim/main.py
\end{verbatim}

  \textbf{Results:} The figures of attack time complexity are in \texttt{<llm or gim>/data/security\_estim.pdf}, corresponding to Figure~4 (a) and (b).
  The detailed results are in \texttt{<llm or gim>/data/security\_estim.csv}, corresponding to Tables~6 and~7.

  \item[(E2):] \textbf{Attack I and II against LDPC-PRC Watermarked Text} [1 human-minute + 3 CPU-minutes/10 groups]. Perform Attacks I and II against the LDPC-PRC watermarked text generated by DeepSeek-R1-Distill-Qwen-7B under $t=3$ and temperature 1.8.

  \textbf{Using pre-generated victim data:} Since generating watermarked text requires significant GPU resources, we provide 10 groups of pre-generated watermarked text under $t=3$. You can obtain the data by:
  {\small 
  \begin{verbatim}
unzip llm/data/Deepseek_t_3_temp_1.8.zip \
  -d llm/data/Deepseek_t_3_temp_1.8_example
\end{verbatim}
}

  \textbf{(Optional) Self-generating victim data:} [Extra 5 GPU-hours and 0.2GB disk space for 10 groups, extra 16GB disk space for model] First, download the model using \texttt{setup/get\_llm.sh}. Then, generate watermarked text with:
{\small
\begin{verbatim}
conda activate prc-estimator-llm
export CUDA_VISIBLE_DEVICES=0
python llm/generation/main.py \
  --temperature 1.8  --model_name Deepseek \
  --prc_t 3 --start 0 --end 10
\end{verbatim}
}
This will generate 10 groups of watermarked text with different $\mathsf{pk}$ to \texttt{llm/data/Deepseek\_t\_3\_temp\_1.8}. 

  \textbf{Execution:} The attacks can be performed with:
{\small
\begin{verbatim}
conda activate prc-estimator-llm
python llm/attack/attack1_2_main.py --t 3 \
  llm/data/Deepseek_t_3_temp_1.8[_example]
\end{verbatim}
}
This attack aims to distinguish PRC codewords and random vectors.

\textbf{Results:} 
The attack results are in \texttt{llm/data/Deepseek\_} \texttt{t\_3\_temp\_1.8[\_example]/result.csv}, corresponding to Table~3.
The meaning of each field is as follows:
\begin{compactdesc}
    \item \textbf{$\text{TPR}_0$:} The true-positive rate at which PRC codewords are correctly classified (higher is better).
    \item \textbf{$\text{FPR}_0$:} The false-positive rate at which random vectors are incorrectly classified as PRC codewords (lower is better).
    \item \textbf{$\text{TPR}_1$:} Measures whether the inversed PRC codeword can be detected as a watermark. A significant gap between $\text{TPR}_1$ and $\text{FPR}_0$ demonstrates better effectiveness of the attack.
    \item \textbf{Success Rate:} For Attack-I, it corresponds to the proportion of instances in which at least one row of the secret key is successfully recovered. For Attack-II, it corresponds to the proportion of instances in which at least one pair of identical rows is successfully identified.
\end{compactdesc}

Although our paper reports results over 128 groups, the results over 10 groups should have the same trend.

The attack details are in \texttt{llm/data/Deepseek\_t\_3} \texttt{\_temp\_1.8[\_example]/log.txt}.
You can inspect some intermediate outputs such as:
{\small
  \begin{verbatim}
# Recovered nonzero positions of the secret key
grep "found l1_vec" \ 
llm/data/Deepseek_t_3_temp_1.8[_example]/log.txt
# Duplicated positions identified from pub key
grep "dup_dict" \
llm/data/Deepseek_t_3_temp_1.8[_example]/log.txt
\end{verbatim}
}

  \item[(E3):] \textbf{Watermarked Text Quality Analysis} [2 human-minutes + 2 CPU minutes]. Analyze the pre-generated quality of the generated watermarked text.
  
  \textbf{Preparation:} Since the quality analysis requires lots of watermarked text generated under different temperatures, the analysis is performed on the pre-generated data, which can be obtained by:
  {\small
  \begin{verbatim}
unzip -d llm/data \
  llm/data/Deepseek_t_3_temp_all.zip 
\end{verbatim}
}
  \textbf{Execution:} Run the following command to analyze the quality of the generated watermarked text.
  {\small
  \begin{verbatim}
conda run -n prc-estimator-llm \
  python llm/generation/plot_entropy.py
\end{verbatim}
}

  \textbf{Results:} The generated figure is in \texttt{llm/data/avg\_} \texttt{entropy\_vs\_correct\_rate.pdf}, corresponding to Figure~5.

  \item[(E4):] \textbf{Attack I and II against LDPC-PRC Watermarked Image} [1 human-minute + 6 CPU-minutes/10 groups]. Perform Attacks I and II against the LDPC-PRC watermarked image generated under $t=3$.

  \textbf{Using pre-generated victim data:} Since generating watermarked images requires significant GPU resources, we provide 10 groups of pre-generated watermarked images under $t=3$. You can obtain the data by:
  {\small
  \begin{verbatim}
unzip gim/data/SD21_t3.zip \
  -d gim/data/SD21_t3_example
unzip gim/data/SD21_t3_inv_SD15_SD2_SD21.zip \
  -d gim/data/SD21_t3_inv_SD15_SD2_SD21_example
\end{verbatim}
}

  \textbf{(Optional) Self-generating victim data:} [Extra 2 GPU-hour, 1GB disk space for 10 groups, extra 25GB disk space for model] First, download the model and dataset with \texttt{setup/get\_gim.sh}. Then, generate watermarked images with:
{\small
\begin{verbatim}
conda activate prc-estimator-gim
export CUDA_VISIBLE_DEVICES=0
python gim/generation/main.py \
  --prc_t 3 --start 0 --end 10 \
  --gen_model_id SD21 --inv_model_ids SD21
# using proxy models
python gim/generation/main.py \
  --start 0 --end 10 --gen_model_id SD21 \
  --inv_model_ids SD15,SD2,SD21 --prc_t 3 
\end{verbatim}
}
This will generate 10 groups of watermarked images with different $\mathsf{pk}$ to \texttt{gim/data/SD21\_t3}, along with 10 groups of watermarked images inverted by 3 different proxy models to \texttt{gim/data/} \texttt{SD21\_t3\_inv\_SD15\_SD2\_SD21}

  \textbf{Execution:} The attacks can be performed with:

{\small
\begin{verbatim}
conda activate prc-estimator-gim
python gim/attack/attack1_2_main.py --t 3 \
  gim/data/SD21_t3[_example]
python gim/attack/attack1_2_diff_inv_main.py \
  gim/data/SD21_t3_inv_SD15_SD2_SD21[_example] \
  --t 3 # using proxy models
\end{verbatim}
}

This attack aims to distinguish PRC codewords and random vectors.

\textbf{Results:} 
The attack results without using proxy model are in \texttt{gim/data/SD21\_t3[\_example]/result.csv}, corresponding to Table~4.
The attack results with proxy model are in \texttt{gim/data/SD21\_t3\_inv\_SD15\_SD2\_} \texttt{SD21[\_example]/result.csv}, corresponding to Table~8.
The meaning of these results are the same as described in  (E2). Although our paper reports results over 128 groups, the results over 10 groups should have the same trend.
You can inspect some intermediate outputs at the corresponding \texttt{log.txt} files as demonstrated in (E2).

  \item[(E5):] \textbf{Watermark Removal Analysis} [5 human-minutes +
  2 GPU-hour for 10 images].
  
  \textbf{Preparation:} Make sure the model and dataset are downloaded with \texttt{setup/get\_gim.sh}.

  \textbf{Execution:} First, generate watermarked images with:
{\small
\begin{verbatim}
conda activate prc-estimator-gim
export CUDA_VISIBLE_DEVICES=0
python gim/generation/main-for-attack3.py \
  --prc_t 3 --start 0 --end 10 \
  --model_id SD21
\end{verbatim}
}
This will generate 10 watermarked images and save intermediate results to \texttt{gim/data/attack3\_SD21\_t3} in 2 GPU-minutes.

Then, generate the adversarial images with $\epsilon = 16.0$ based on those intermediate results (corresponding to the attack results for Attack III) with:
{\small
\begin{verbatim}
conda activate prc-estimator-gim
export CUDA_VISIBLE_DEVICES=0
python gim/attack/attack3_main.py \
  gim/data/attack3_SD21_t3 --start 0 --end 10 \
  --model_id SD21 --eps 16.0
\end{verbatim}
}

The attack will run for 1 GPU-hour for 10 images.

\textbf{Results:}
The generated adversarial images are in \texttt{gim/data/attack3\_SD21\_t3/inv\_lat\_16.0}, and the images with removed watermarks are in \texttt{gim/data/attack3\_SD21\_t3/adv\_img\_16.0}.

To calculate the removal success rate, run:
{\small
\begin{verbatim}
conda activate prc-estimator-gim
python gim/attack/attack3_stati.py \
  gim/data/attack3_SD21_t3/inv_lat_16.0
\end{verbatim}
}

This result corresponds to the attack success rate reported at the end of Section~6.3.2.
\end{compactdesc}

\subsection{Version}
Based on the LaTeX template for Artifact Evaluation V20231005. Submission,
reviewing and badging methodology followed for the evaluation of this artifact
can be found at \url{https://secartifacts.github.io/usenixsec2026/}.




\end{document}